\newtheorem{theorem}{Theorem}
\newtheorem{lemma}{Lemma}
\newtheorem{corollary}{Corollary}
\newtheorem{definition}{Definition}
\newcommand{\Oh}{{\ensuremath{\mathcal{O}}}}
\newcommand{\Qn}{{\ensuremath{\textsf{qn}}}}
\newcommand{\fullBM}{$\Delta$-matched\xspace}
\title{Planar Graphs of Bounded Degree have\\Bounded Queue Number%
\thanks{%
Work partially supported by the DFG grant Ka812/17-1, by DAAD project~57419183, and by the MIUR grant 20174LF3T8 AHeAD: efficient Algorithms for HArnessing networked Data.}}
\author{
Michael~A.~Bekos\footnotemark[2]%
\and 
Henry~F\"orster\footnotemark[2]%
\and 
Martin~Gronemann\footnotemark[3]%
\and 
Tamara~Mchedlidze\footnotemark[4]%
\and 
Fabrizio~Montecchiani\footnotemark[5]%
\and 
Chrysanthi~Raftopoulou\footnotemark[6]%
\and 
Torsten~Ueckerdt\footnotemark[4]%
\\[0.2in]
\footnotemark[2]~~University of T{\"u}bingen, T{\"u}bingen, Germany \\\texttt{\small bekos@informatik.uni-tuebingen.de} and \texttt{\small foersth@informatik.uni-tuebingen.de}
\\[0.1in]
\footnotemark[3]~~University of Cologne, Cologne, Germany \\\texttt{\small gronemann@informatik.uni-koeln.de}
\\[0.1in]
\footnotemark[4]~~Karlsruhe Institute of Technology, Karlsruhe, Germany \\\texttt{\small mched@iti.uka.de} and \texttt{\small torsten.ueckerdt@kit.edu}
\\[0.1in]
\footnotemark[5]~~University of Perugia, Perugia, Italy\\\texttt{\small fabrizio.montecchiani@unipg.it}
\\[0.1in]
\footnotemark[6]~~National Technical University of Athens, Athens, Greece\\\texttt{\small crisraft@mail.ntua.gr}
}
\date{} 
\begin{document}

\maketitle

\begin{abstract}
A \emph{queue layout} of a graph consists of a \emph{linear order} of its vertices and a partition of its edges into \emph{queues}, so that no two independent edges of the same queue are nested. The \emph{queue number} of a graph is the minimum number of queues required by any of its queue layouts.
A long-standing conjecture by Heath, Leighton and Rosenberg states that the queue number of planar graphs is bounded. This conjecture has been partially settled in the positive for several subfamilies of planar graphs (most of which have bounded treewidth). In this paper, we make a further important step towards settling this conjecture. We prove that planar graphs of bounded degree (which may have unbounded treewidth) have bounded queue number. 

A notable implication of this result is that every planar graph of bounded degree admits a three-dimensional straight-line grid drawing in linear volume. Further implications are that  every planar graph of bounded degree has bounded track number, and that every $k$-planar graph (i.e., every graph that can be drawn in the plane with at most $k$ crossings per edge) of bounded degree has bounded queue number.  
\end{abstract}

\section{Introduction}
\label{sec:introduction}

Queue layouts of graphs form a well-known type of linear layout and play an important role in various fields, such as in sorting~\cite{T72}, scheduling~\cite{BCLR96}, VLSI circuit design~\cite{LR86}, matrix computations~\cite{Pem92} and graph drawing~\cite{DLM05,DMW05}; see also \cite{DW04,Pem92} for further applications. A \emph{queue layout} of a graph consists of a vertex ordering and a partition of its edges, so that no two independent edges in the same part, called \emph{queue}, are nested~\cite{HR92}; see Figure~\ref{fig:example} for an illustration. The \emph{queue number} $\Qn(G)$ of a graph $G$ is the minimum number of queues in any queue layout of $G$. Note that queue layouts form the ``dual'' concept of \emph{stack layouts}~\cite{Oll73,Yan89} (widely known as \emph{book embeddings}), in which two edges of the same stack are allowed to nest but not to cross. 

It is known that there exist non-planar graphs on $n$ vertices with $\Theta(n)$ queue number; for example, the queue number of the complete graph $K_n$ is $\lfloor n/2 \rfloor$~\cite{HR92}. Moreover, there exist graphs of bounded degree that may require arbitrarily many queues~\cite{HLR92,DBLP:journals/dmtcs/Wood08}.
Concerning sublinear upper bounds, graphs with $m$ edges have queue number $O(\sqrt{m})$~\cite{HLR92}, while graphs with $n$ vertices that belong to any minor-closed graph family have queue number $\log^{O(1)}n$~\cite{DMW17}. Bounded queue number is achieved by all graphs of bounded treewidth~\cite{DMW05}. In particular, a graph with treewidth $w$ has queue number $\Oh(2^w)$~\cite{Wie17}. Improved bounds (linear in the parameter) are known for graphs of bounded pathwidth~\cite{DMW05}, bounded track number~\cite{DW05}, bounded bandwidth~\cite{HLR92}, or bounded layered pathwidth~\cite{BDDEW18}; for a survey we refer the reader to~\cite{DMW17}.

\begin{figure*}[t]
	\centering
	\includegraphics[width=0.45\textwidth,page=3]{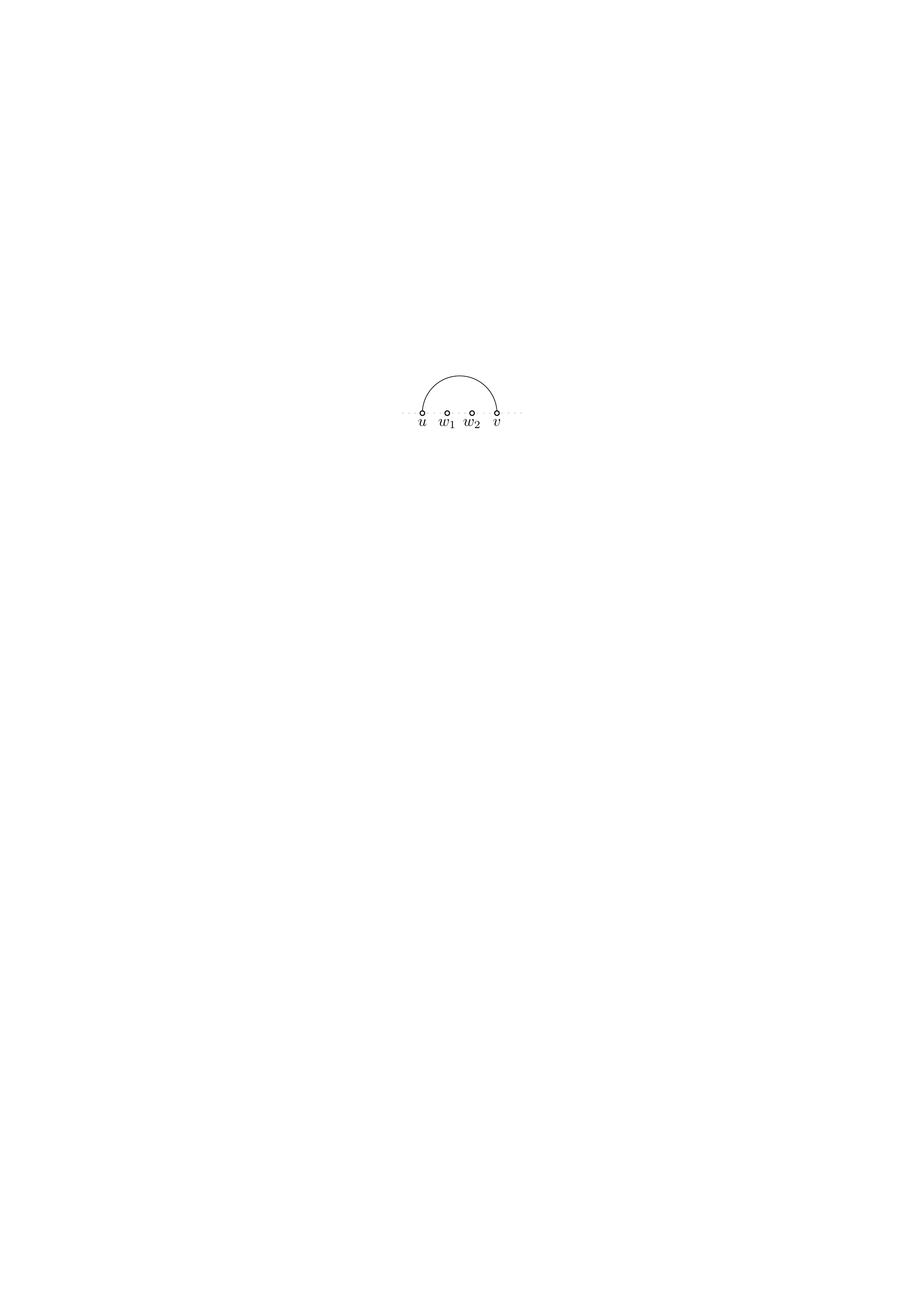}
    \caption{Illustration of a queue layout of the complete graph $K_4$ on $4$ vertices $\{u_1,\ldots,u_4\}$ with two queues (solid and dotted).}
	\label{fig:example}
\end{figure*}

A rich body of literature focuses on planar graphs. In fact, it is known that the graphs that admit $1$-queue layouts are the arched-level planar graphs~\cite{HR92}, which are planar graphs with at most $2n-3$ edges over $n$ vertices (note that testing whether a graph is arched-level planar is NP-complete~\cite{HLR92}). Trees are arched-level planar and therefore have queue number one~\cite{HR92}. Outerplanar graphs have queue number at most two~\cite{HLR92}, Halin graphs and series-parallel graphs have queue number at most three~\cite{Ganley95,RM95}, and planar $3$-trees have queue number at most five~\cite{ABG0P18}. However, it is still unknown whether the queue number of planar graphs is bounded (see also~\cite{openproblem}). In particular, back in 1992, Heath, Leighton and Rosenberg~\cite{HLR92} conjectured that every planar graph has bounded queue number. Notably, this conjecture has not been settled after almost three decades. The best-known upper bound is due to Dujmovi{\'c}~\cite{Duj15}, who showed that the queue number of planar graphs on $n$ vertices is $\Oh(\log n)$ improving upon a previous bound by Di Battista et al.~\cite{BFP13}; recently, Bannister et al.~\cite{BDDEW18} improved the constant factor in the result of Dujmovi{\'c}~\cite{Duj15}. On the other hand, the best-known lower bound is due to a family of planar $3$-trees that require four queues~\cite{ABG0P18}. 

It is worth noting that a positive answer to the conjecture by Heath, Leighton and Rosenberg~\cite{HLR92} would have several implications. First, every planar graph would admit a three-dimensional grid drawing in linear volume~\cite{DMW05}, which is a major open problem in graph drawing first posed by Felsner et al.~\cite{DBLP:journals/jgaa/FelsnerLW03} back in 2003.  Second, every planar graph would have bounded track number~\cite{DPW04}. Third, every bipartite planar graph would have a $2$-layer drawing and a corresponding edge-coloring with a bounded number of colors, in which no two edges of the same color cross~\cite{DW05}. 
Indeed, the above three problems are equivalent to queue layouts~\cite{DMW05,DPW04,DW05}, that is, for instance, if planar graphs have bounded track number, then they also have bounded queue number.
Finally, it is known that if the queue number of planar graphs is bounded, then the same holds for the broader family of $k$-planar graphs, i.e., of those graphs that can be drawn in the plane such that each edge is crossed at most $k$ times (for fixed values of $k$)~\cite{DW05}. 

\subsection{Our Contribution} In this paper, we make an important step towards settling the conjecture by Heath, Leighton and Rosenberg~\cite{HLR92} that planar graphs have bounded queue number. Namely, we prove that every planar graph of maximum degree $\Delta$ has queue number $\Oh(\Delta^{c})$, where $c$ is a small constant. This implies that the conjecture holds for planar graphs of bounded degree. More precisely, the main contribution of this paper is the following theorem.

\begin{theorem}\label{thm:main}
Every planar graph of maximum degree $\Delta$ has queue number at most $32(2\Delta-1)^{6}-1$.
\end{theorem}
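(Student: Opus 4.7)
The plan is to convert the problem into a track-layout problem and exploit the structure of BFS-layered planar graphs. A classical result of Dujmovi\'c, P\'or and Wood states that a graph admitting a $t$-track layout has queue number at most $t-1$, so it would suffice to construct a track layout of $G$ with $O(\Delta^{6})$ tracks.

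I would begin by fixing a BFS spanning tree rooted at an arbitrary vertex, yielding layers $L_0, L_1, \ldots, L_h$. A key feature of this layering in a planar graph is that every non-tree edge either lies within a single layer or connects two consecutive layers, and the bipartite subgraph between two consecutive layers is ``level planar''. This combinatorial simplicity is what makes inter-layer edges tractable, and is the starting point for any track-based construction.

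The core structural step would be to partition each layer $L_i$ into a number (polynomial in $\Delta$) of paths, such that (a) each path is consistent with the given planar embedding, and (b) the inter-layer edges can be sorted into queues consistent with the paths in $L_{i-1}$ and $L_i$. Bounded degree enters crucially here: each vertex has at most $\Delta$ neighbours in adjacent layers, which limits how many distinct paths a single vertex needs to meet, and therefore the number of tracks required locally. For the in-layer edges, one exploits that the subgraph induced by a BFS layer in a planar graph is essentially a forest-like structure, which can also be accommodated in a bounded number of tracks using the bounded degree.

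The main obstacle I foresee is controlling the \emph{global} track number when combining decompositions across layers, since the number of layers is unbounded; naive combination would make the track count grow with $h$. The ``product-of-six-factors'' shape of the bound $32(2\Delta-1)^6-1$ suggests that the track construction proceeds in several nested stages, each contributing a factor of roughly $2\Delta-1$ to the final count, and that one must reuse a fixed palette of tracks across all BFS layers in a coordinated way. A plausible outline is: first show a coarse decomposition of $G$ into ``columns'' (BFS-consistent substructures, e.g.\ based on paths in the BFS tree) whose quotient graph has bounded queue number, then refine each column into a few tracks using the in-layer path decomposition above, and finally use bounded degree to merge these refinements into a single track layout whose size is polynomial in $\Delta$ and independent of $n$.
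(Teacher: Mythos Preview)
Your proposal is a plan rather than a proof, and it diverges substantially from the paper's argument; more importantly, it misreads the structure of the bound and is missing the key technical device.

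First, the exponent~$6$ in $32(2\Delta-1)^6-1$ does not come from ``six nested stages each contributing a factor $2\Delta-1$''. In the paper the bound arises in a single stroke from the Dujmovi\'c--Wood subdivision lemma (Lemma~\ref{lem:subdivision}): one constructs an auxiliary graph $G_1$ by subdividing each edge of $G$ at most three times, shows directly that $G_1$ has queue number at most $2\Delta-2$, and then applies the lemma with $k=3$ and $q=2\Delta-2$ to get $\tfrac12(2q+2)^{2k}-1 = 32(2\Delta-1)^6-1$. So the real work is to obtain a queue layout of $G_1$ with only $2\Delta-2$ queues, and the exponent is an artefact of undoing subdivisions.

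Second, the paper's mechanism for that step is absent from your outline. The subdivisions are chosen so that every non-tree edge becomes a level edge whose endpoints are leaves of the BFS tree, and a further (conceptual) replacement turns the whole graph into a \emph{\fullBM graph}: a $(\Delta-1)$-ary tree with a perfect matching on its leaves. The crucial idea is then to assign to each vertex a \emph{matching-value} (the minimum nesting depth of matching edges below it) and to bucket vertices within a layer into \emph{layer-groups} of geometrically growing width $(\Delta-1)^\ell$. Ordering first by layer, then by layer-group, then by the planar left-to-right order, forces all matching edges into a single necklace and confines every tree edge to one of $2\Delta-3$ ``offset'' queues. This layer-group trick is exactly what achieves the global coordination across an unbounded number of BFS layers that you correctly identify as the main obstacle; your proposal of ``reusing a fixed palette of tracks in a coordinated way'' via a column decomposition does not supply a comparable mechanism.

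Finally, your claim that the subgraph induced by a BFS layer is ``essentially a forest-like structure'' is not correct in general, even under a degree bound, so the in-layer step of your plan would need a different justification.
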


The proof of Theorem~\ref{thm:main} is constructive and yields an algorithm that computes a queue layout of the input graph $G$ with at most $32(2\Delta-1)^{6}-1$ queues in polynomial time (see Theorem~\ref{thm:time}). 

Closely related to queue layouts are the so-called \emph{track layouts}. In a track layout, the vertices of a graph are partitioned into sequences, called \emph{tracks}, such that the vertices in each track form an independent set and the edges between each pair of tracks do not cross. The \emph{track number} of a graph $G$ is the minimum number of tracks in a track layout of $G$. Whether planar graphs have bounded track number is still an open question, with the current best upper bound being $\Oh(\log{n})$~\cite{BDDEW18}. Dujmovi\'c, Morin and Wood~\cite{DMW05} prove that every graph with queue-number $q$ and acyclic chromatic number $k$ has track-number at most $q(2k)^{q-1}$. This result combined with Theorem~\ref{thm:main} implies that planar graphs of bounded degree have bounded track number, since planar graphs have acyclic chromatic number at most five~\cite{Borodin79}. More precisely:

\begin{corollary}\label{cor:tracks}
Every planar graph of maximum degree $\Delta$ has track number $\Delta^6\,2^{\Oh(\Delta^6)}$.
\end{corollary}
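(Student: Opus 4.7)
The plan is to derive Corollary~\ref{cor:tracks} as an immediate consequence of Theorem~\ref{thm:main} by plugging its bound into the queue-to-track conversion lemma of Dujmovi\'c, Morin and Wood, together with Borodin's classical bound on the acyclic chromatic number of planar graphs. All three ingredients are already cited in the excerpt, so the proof is essentially a one-line quantitative composition.

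Concretely, I would first apply Theorem~\ref{thm:main} to the input planar graph $G$ of maximum degree $\Delta$, obtaining a queue layout with
\[
q \;\le\; 32(2\Delta-1)^{6}-1 \;=\; \Oh(\Delta^{6})
\]
queues. Next, I would invoke Borodin's theorem to get an acyclic vertex coloring of $G$ with $k\le 5$ colors; since $G$ is planar, this bound holds regardless of $\Delta$. Finally, I would apply the Dujmovi\'c--Morin--Wood theorem stated just before the corollary: any graph with queue number $q$ and acyclic chromatic number $k$ admits a track layout with at most $q(2k)^{q-1}$ tracks. Substituting $k\le 5$ and $q\le 32(2\Delta-1)^{6}-1$ yields a bound of
\[
q\,(2k)^{q-1} \;\le\; q\cdot 10^{\,q-1} \;=\; \Oh(\Delta^{6})\cdot 2^{\Oh(\Delta^{6})} \;=\; \Delta^{6}\,2^{\Oh(\Delta^{6})},
\]
matching the statement of the corollary.

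There is no real obstacle here: the only thing to check is that the three cited results compose as expected and that the arithmetic absorbs the polynomial prefactor into the exponential term $2^{\Oh(\Delta^{6})}$. I would also note that the construction is algorithmic, since both the queue layout produced by Theorem~\ref{thm:main} and the acyclic $5$-coloring of a planar graph can be computed in polynomial time, and the queue-to-track transformation of Dujmovi\'c--Morin--Wood is itself constructive. Hence the same proof yields a polynomial-time algorithm producing a track layout of the claimed size.
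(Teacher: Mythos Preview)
Your proposal is correct and follows exactly the same route as the paper: the paper's own justification (given in the paragraph immediately preceding the corollary) is precisely the composition of Theorem~\ref{thm:main}, Borodin's bound $k\le 5$, and the Dujmovi\'c--Morin--Wood inequality $q(2k)^{q-1}$. Your added remark about the construction being algorithmic is a harmless extra that the paper does not state explicitly for this corollary.
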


Dujmovi\'c and Wood~\cite{DW04b} proved that every $n$-vertex $c$-colourable graph with track number $t$ has a three-dimensional straight-line grid drawing with bounding box $\Oh(c) \times \Oh(c^2t) \times \Oh(c^5n)$.  Conversely, if a graph has a three-dimensional straight-line grid drawing with bounding box $X \cdot Y \cdot Z$, then it has track number at most $2XY$~\cite{DBLP:conf/gd/DujmovicMW02}.
This result combined with Corollary~\ref{cor:tracks} yields the following corollary. We remark that the best upper bound currently known for the volume of three-dimensional grid drawings of planar graphs is $\Oh(n \log{n})$~\cite{BDDEW18}.

\begin{corollary}
Every planar graph of maximum degree $\Delta$ admits a straight-line drawing on a $\Oh(1) \times \Delta^6\,2^{\Oh(\Delta^6)} \times \Oh(n)$ three-dimensional grid.
\end{corollary}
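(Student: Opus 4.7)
The plan is to read this corollary as a direct composition of the two ingredients collected immediately before its statement, namely Corollary~\ref{cor:tracks} (track-number bound) and the Dujmovi\'c--Wood bounding-box construction that turns a track layout into a three-dimensional straight-line grid drawing. Concretely, let $G$ be a planar graph of maximum degree $\Delta$ on $n$ vertices; I plan to produce the drawing by first invoking Corollary~\ref{cor:tracks} to obtain a track layout of $G$ with $t = \Delta^{6}\,2^{\Oh(\Delta^{6})}$ tracks, and then feeding this track layout into the Dujmovi\'c--Wood construction, which outputs a three-dimensional straight-line grid drawing with bounding box $\Oh(c) \times \Oh(c^{2}t) \times \Oh(c^{5}n)$, where $c$ is the chromatic number of $G$.

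The second step is where I would pin down the constant factor $c$. Since $G$ is planar, its chromatic number is at most four by the Four Colour Theorem (the weaker Five Colour Theorem suffices if one wishes to avoid it); in particular $c = \Oh(1)$, independent of both $n$ and $\Delta$. Plugging $c = \Oh(1)$ into the three factors of the bounding box collapses them to $\Oh(1) \times \Oh(t) \times \Oh(n)$, and substituting the value of $t$ from Corollary~\ref{cor:tracks} yields a box of dimensions $\Oh(1) \times \Delta^{6}\,2^{\Oh(\Delta^{6})} \times \Oh(n)$, exactly as claimed.

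I do not anticipate any real obstacle: the corollary is purely a matter of assembling two black-box results with the right constant-colouring hypothesis. The only point that deserves a brief comment is the distinction between the acyclic chromatic number (used inside the proof of Corollary~\ref{cor:tracks}, bounded by five on planar graphs via Borodin's theorem) and the ordinary chromatic number $c$ required by the drawing construction; the former controls the dependence on $\Delta$ inside $t$, while the latter keeps the $c$-factors in the bounding box at $\Oh(1)$, so the two hypotheses do not interfere and the stated bound follows immediately.
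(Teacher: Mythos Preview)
Your proposal is correct and matches the paper's own argument: the corollary is obtained by combining Corollary~\ref{cor:tracks} with the Dujmovi\'c--Wood bounding-box result $\Oh(c)\times\Oh(c^2t)\times\Oh(c^5n)$, using that planar graphs are $\Oh(1)$-colourable. There is nothing to add.
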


Another related problem is the \emph{$2$-track thickness} problem, in which one seeks for a $2$-layer drawing of a bipartite graph and a corresponding coloring of its edges with as few colors as possible, such that no two edges of the same color cross; the minimum number of colors required in any such drawing is referred to as \emph{$2$-track thickness}. Again, it is not known whether the $2$-track thickness of bipartite planar graphs is bounded, with the best upper bound being $\Oh(\log{n})$~\cite{DW05}, which is due to a result by Dujmovi\'c and Wood~\cite{DW05} that relates the queue number of a bipartite planar graph and its $2$-track thickness. Using the same result, together with Theorem~\ref{thm:main}, we obtain that every bipartite planar graph of bounded degree has bounded $2$-track thickness. 

\begin{corollary}
Every bipartite planar graph of maximum degree $\Delta$ has $2$-track thickness $\Oh(\Delta^6)$.
\end{corollary}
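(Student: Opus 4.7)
The plan is to chain Theorem~\ref{thm:main} with the Dujmovi\'c--Wood relationship between queue number and $2$-track thickness, exactly as the paragraph preceding the corollary suggests. First, I would invoke Theorem~\ref{thm:main} on the given bipartite planar graph $G$ of maximum degree $\Delta$, obtaining a queue layout of $G$ with at most $32(2\Delta-1)^{6}-1$ queues; in particular, $\Qn(G)=\Oh(\Delta^{6})$. Note that nothing in Theorem~\ref{thm:main} uses non-bipartiteness, so the hypothesis transfers cleanly.

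Next, I would cite the result of Dujmovi\'c and Wood~\cite{DW05} asserting that for every bipartite graph $H$, the $2$-track thickness of $H$ is bounded by a (linear, in their formulation) function of $\Qn(H)$; this is precisely the equivalence between queue layouts of bipartite graphs and $2$-layer drawings with few crossing-free color classes that is alluded to in the introduction. Substituting the bound from the previous step gives that the $2$-track thickness of $G$ is $\Oh(\Qn(G)) = \Oh(\Delta^{6})$, which is exactly the claim.

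There is essentially no technical obstacle here, since both ingredients are already in place: Theorem~\ref{thm:main} is the main contribution of the paper, and the conversion from queue number to $2$-track thickness for bipartite graphs is a black-box result from~\cite{DW05}. The only thing to be careful about is to apply the bipartite-specific version of the Dujmovi\'c--Wood reduction (rather than the general-graph version, which loses a factor depending on the chromatic number), so that the final bound remains polynomial in $\Delta$ alone and does not pick up extra dependence on $n$ or on a coloring parameter. With that caveat the proof reduces to a one-line composition of the two bounds.
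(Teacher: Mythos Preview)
Your proposal is correct and matches the paper's approach exactly: the corollary is obtained by composing Theorem~\ref{thm:main} with the Dujmovi\'c--Wood result~\cite{DW05} relating queue number and $2$-track thickness for bipartite graphs, precisely as the paragraph preceding the corollary outlines. There is no additional argument in the paper beyond this one-line combination.
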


A last corollary of Theorem~\ref{thm:main} stems from another result by Dujmovi\'c and Wood~\cite{DW05}, who prove that if the queue number of planar graphs is bounded, then the same holds for $k$-planar graphs when $k$ is a fixed constant. Recall that a graph is $k$-planar if it admits a drawing such that each edge is crossed at most $k$ times. The proof in~\cite{DW05} relies on a planarization technique, in which each crossing is replaced by a degree-$4$ vertex. If the input is a $k$-planar graph of bounded degree, the obtained planarization is also of bounded degree, and we obtain the following corollary. Note that, Dujmovi\'c and Frati~\cite{DBLP:journals/jgaa/DujmovicF18} proved that $k$-planar graphs have queue number $\Oh(\log{n})$.   

\begin{corollary}
For fixed $k \geq 0$, every $k$-planar graph of maximum degree $\Delta$ has queue number $\Delta^{\Oh(k)}$.
\end{corollary}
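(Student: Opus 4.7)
The plan is to follow the planarization approach sketched in the paragraph just before the statement and reduce everything to Theorem~\ref{thm:main}. First I would fix a $k$-planar drawing of $G$ and construct its planarization $H$ by replacing every crossing point with a new degree-$4$ vertex. Because each edge of $G$ is crossed at most $k$ times, $H$ is obtained from $G$ by subdividing each edge at most $k$ times, and each original vertex of $G$ retains exactly its $\Delta$ incident edge fragments. Hence $H$ is a planar graph of maximum degree $\max(\Delta,4) = O(\Delta)$, and the fixed constant $k$ does not enter the degree bound at all.

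Next I would apply Theorem~\ref{thm:main} directly to $H$ to obtain $\Qn(H) \leq 32(2\max(\Delta,4)-1)^{6}-1 = O(\Delta^{6})$. The remaining step is to transfer this bound from the subdivision $H$ back to the original graph $G$. For this I would invoke the result of Dujmovi\'c and Wood~\cite{DW05} relating the queue number of a graph to that of a shallow subdivision of it: when $H$ is a $\leq k$-subdivision of $G$, one can bound $\Qn(G)$ polynomially in $\Qn(H)$, with an exponent that depends only on $k$, of the form $\Qn(G) \leq \Qn(H)^{O(k)}$. Substituting $\Qn(H) = O(\Delta^{6})$ then yields $\Qn(G) \leq \bigl(O(\Delta^{6})\bigr)^{O(k)} = \Delta^{O(k)}$, which is exactly the claimed bound.

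The hard part will be controlling the $k$-dependence in the subdivision-to-original transfer, so that the final exponent on $\Delta$ stays linear in $k$ rather than growing exponentially. This is precisely the delicate ingredient in the implication ``bounded queue number for planar graphs $\Rightarrow$ bounded queue number for $k$-planar graphs'' established in~\cite{DW05}: once such a polynomial transfer is in hand, the rest of our argument is just substitution, and the explicit polynomial bound $\Qn(H) = O(\Delta^{6})$ coming from Theorem~\ref{thm:main} makes the dependence on $\Delta$ completely transparent.
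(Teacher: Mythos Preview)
Your overall strategy coincides with the paper's: planarize a $k$-planar drawing of $G$, note that the planarization $H$ is a planar graph of maximum degree $\max(\Delta,4)=O(\Delta)$, apply Theorem~\ref{thm:main} to obtain $\Qn(H)=O(\Delta^6)$, and then transfer this bound back to $G$ using the Dujmovi\'c--Wood machinery from~\cite{DW05}. The paper does exactly this, treating the last step as a black box.

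There is, however, one genuine slip in your write-up. The planarization $H$ is \emph{not} a subdivision of $G$: each inserted crossing vertex has degree~$4$, whereas subdivision vertices must have degree~$2$. Hence Lemma~\ref{lem:subdivision} cannot be applied directly to the pair $(G,H)$ as you state. What is true is that the $(\leq k)$-subdivision $G'$ of $G$ (subdividing each edge once per crossing) arises from $H$ by splitting each crossing vertex into two degree-$2$ vertices; one then needs an additional short argument to turn a $q$-queue layout of $H$ into an $O(q)$-queue layout of $G'$ (place the two copies of each crossing vertex consecutively and control the few new nestings this can create), after which Lemma~\ref{lem:subdivision} applied to $G'$ and $G$ yields $\Qn(G)\le (O(\Delta^6))^{O(k)}=\Delta^{O(k)}$. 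The paper avoids this subtlety by simply invoking the full ``planar bounded $\Rightarrow$ $k$-planar bounded'' implication from~\cite{DW05}; your attempt to unpack that implication via the subdivision lemma is where the inaccuracy enters.
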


\subsection{Proof strategy} Our approach starts with a layering of the vertices of the input graph obtained from a breadth-first search (BFS) traversal. This layering serves as a basis for computing the linear order of the vertices in the queue layout. In particular, the vertices that belong to earlier layers in the BFS-tree $T$ will precede those belonging to subsequent layers of $T$. This is used to prove that all edges that belong to $T$ do not nest in such a linear order. The main challenge with this approach is to deal with edges that are not part of $T$. Such edges either connect vertices on the same layer or vertices lying on consecutive layers. While the second type of edges can be eliminated by subdividing them a constant number of times, the first type of edges may still result in arbitrarily large groups of pairwise nesting edges, commonly called \emph{rainbows}. To cope with this issue, we change the order of the vertices on each layer so as to eliminate all nestings between edges connecting vertices of the same layer. On the other hand, this will unavoidably introduce rainbows formed by edges of $T$. However, we reorder the vertices such that the maximum number of edges of $T$ in the same rainbow is bounded by a polynomial in $\Delta$. For ease of description, we~first discuss our approach in a special case, namely when the edges that do not belong to $T$ form a perfect matching on its leaves. Then we show how to use the solution of the special case for the general~case.

\subsection{Paper organization and overview} In Section~\ref{sec:preliminaries}, we introduce notation and definitions that are used throughout this paper. We also present results from the literature that we exploit in our algorithm. In Sections~\ref{sec:subcase} and~\ref{sec:general}, we describe an efficient algorithm that takes a planar graph of maximum degree $\Delta$ as input, and outputs a queue layout of this graph with at most $32(2\Delta-1)^{6}-1$ queues. More precisely, in Section~\ref{sec:subcase} we introduce a special subfamily of plane graphs with maximum degree $\Delta$, for which it is possible to compute a queue layout with at most $2\Delta-2$ queues. Then, in Section~\ref{sec:general}, we reduce the problem of finding a queue layout of a general planar graph of degree $\Delta$ to the already discussed special case. This reduction increases the number of queues used to $32(2\Delta-1)^{6}-1$. Section~\ref{sec:timeComplexity} discusses the time complexity of our algorithm.  Section~\ref{sec:conclusions} concludes the paper with open problems.

\subsection{Subsequent work} We conclude our introduction by summarizing recent developments on the long-standing conjecture by Heath, Leighton and Rosenberg~\cite{HLR92}, triggered by the first version of our paper~\cite{DBLP:journals/corr/abs-1811-00816}. In a follow-up arXiv paper~\cite{DBLP:journals/corr/abs-1901-05594}, Dujmovi\'c, Morin, and Wood improved the upper bound of Theorem~\ref{thm:main}, and they extended the result to graphs with bounded genus. More precisely, they proved that graphs with Euler genus $g$ and maximum degree $\Delta$ have queue number $\Oh(g+\Delta^2)$. This is obtained by refining the analysis of our algorithm to obtain a $\Oh(\Delta^2)$ upper bound, and then by applying our algorithm as a black-box for graphs with genus $g$. It is worth remarking that the analysis in~\cite{DBLP:journals/corr/abs-1901-05594} (and hence the resulting $\Oh(\Delta^2)$ upper bound) is best possible, up to a constant factor. Most importantly and very recently, Dujmovi\'c, Joret, Micek, Morin, Ueckerdt, and Wood~\cite{DBLP:journals/corr/abs-1904-04791,focs} settled the conjecture by Heath, Leighton and Rosenberg~\cite{HLR92} by proving that every planar graph has queue number bounded by a constant, irrespective of the maximum degree. The proof is based on new structural tools and the result generalises for graphs of bounded Euler genus and further for every proper minor-closed class of graphs.

\section{Preliminaries}
\label{sec:preliminaries}
In this section we introduce preliminary concepts and known results that we use throughout the paper. Note that we assume familiarity with basic graph theoretic concepts; see, e.g.,~\cite{DBLP:books/daglib/0030488}.

\subsection{Queue Layouts} Let $G=(V,E)$ be an $n$-vertex \emph{simple} connected undirected graph, that is, a graph with neither self-loops nor multi-edges. We denote an edge between vertices $u$ and $v$ by $(u,v)$. Let $\prec$ be a linear order of the vertices of $G$. Consider two \emph{independent} edges $(u_1,v_1)$ and $(u_2,v_2)$ that do not share a common endvertex. Up to a renaming of their endvertices, we may assume that $u_1 \prec v_1$ and $u_2 \prec v_2$. We say that $(u_1,v_1)$ \emph{nests} $(u_2,v_2)$ with respect to $\prec$ if and only if $u_1 \prec u_2 \prec v_2 \prec v_1$; see Figure~\ref{fig:nest}. Consider now $k$ edges $(u_1,v_1),\ldots,(u_k,v_k)$ that are pairwise independent, such that $u_i \prec v_i$ for each $i=1,\ldots,k$. \begin{figure*}[t]
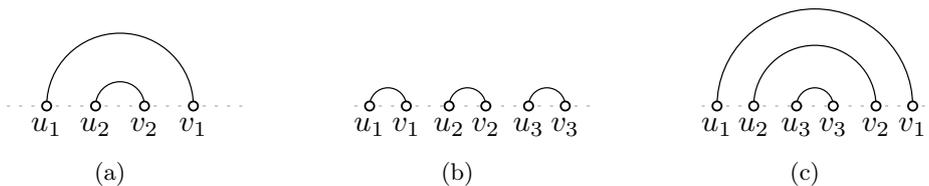

	\centering
	\subcaptionbox{\centering \label{fig:nest}}{
	\includegraphics[width=0.22\textwidth,page=2]{figs/examples}}
	\hfil
    \subcaptionbox{\centering \label{fig:necklace}}{
	\includegraphics[width=0.22\textwidth,page=4]{figs/examples}}
	\hfil
	\subcaptionbox{\centering \label{fig:rainbow}}{
	\includegraphics[width=0.22\textwidth,page=5]{figs/examples}}
    \caption{Illustration of:  
    (a)~two nesting edges $(u_1,v_1)$ and $(u_2,v_2)$,
    (b)~a $3$-necklace, and 
    (c)~a $3$-rainbow.}
	\label{fig:linearLayouts}
\end{figure*}
If $u_1 \prec v_1 \prec u_2 \prec v_2 \prec \ldots \prec u_k \prec v_k$, then we say that edges $(u_1,v_1),\ldots,(u_k,v_k)$ form a \emph{$k$-necklace} in $\prec$; see Figure~\ref{fig:necklace}.
On the other hand, if $u_1 \prec \ldots \prec u_k \prec v_k \prec \ldots \prec v_1$, then we say that edges $(u_1,v_1),\ldots,(u_k,v_k)$ form a \emph{$k$-rainbow} in $\prec$; see Figure~\ref{fig:rainbow}. 

A preliminary result by Heath and Rosenberg~\cite{HR92} shows that a graph admits a queue layout with $k$ queues if and only if there exist a linear order of its vertices in which no $(k+1)$-rainbow is formed.
Another useful tool for analyzing the queue number of graphs is the fact that the queue number of a graph is bounded by the queue number of any of its subdivisions. More precisely: 

\begin{lemma}[Dujmovi\'c and Wood~\cite{DW05}]
Let $D$ be a subdivision of a graph $G$ obtained by subdividing each of the edges of $G$ at most $k$ times. If $\Qn(D) \le q$, then $\Qn(G) \le \frac{1}{2}(2q+2)^{2k}-1$.
\label{lem:subdivision}
\end{lemma}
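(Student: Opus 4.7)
The plan is to use the given queue layout of $D$ directly: let $\prec_D$ denote the linear order on $V(D)$ induced by a $q$-queue layout of $D$, and take $\prec_G := \prec_D|_{V(G)}$ as the candidate linear order of $G$. By the Heath--Rosenberg characterization recalled immediately above, it suffices to show that every rainbow of $G$-edges in $\prec_G$ has size at most $\tfrac12(2q+2)^{2k}-1$.

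Each edge $e=(u,v)$ of $G$ corresponds in $D$ to a subdivision path $P_e$ of length at most $k+1$ from $u$ to $v$. Suppose for contradiction that $e_1,\ldots,e_r$ is a rainbow of $G$-edges in $\prec_G$, with $e_j$ nested inside $e_i$ whenever $i<j$. I would assign to each $e_i$ a \emph{signature} $\sigma(e_i)$ that records, for each of the (at most) $k$ internal subdivision vertices of $P_{e_i}$, a queue index (one of the $q$ queues of $D$, carried by an incident edge of $P_{e_i}$) together with a constant-size tag describing the local order of this vertex with respect to its path-neighbours in $\prec_D$. A careful design keeps the total number of possible signatures at most $\tfrac12(2q+2)^{2k}$.

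The core of the argument is to couple this signature with a canonical rule for selecting one edge of $P_{e_i}$ from $\sigma(e_i)$, and to show that $G$-edges in the rainbow with identical signatures give rise to a nested family of $D$-edges in $\prec_D$. Combining this with $\Qn(D)\le q$ (which forbids rainbows of size larger than $q$ in $\prec_D$) and a pigeonhole count over the signature classes then yields, after book-keeping, the inequality $r\le \tfrac12(2q+2)^{2k}-1$.

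The main obstacle is the joint design of the signature and the selection rule. Subdivision vertices of $P_{e_i}$ may lie outside the interval $[u_i,v_i]$ in $\prec_D$, so the paths $P_{e_i}$ can zig-zag and neither the first nor the middle edge of $P_{e_i}$ need preserve the nesting inherited from $\prec_G$. One must make the signature fine enough to repair this behaviour, yet coarse enough that its alphabet remains of size $\tfrac12(2q+2)^{2k}$. A simple reduction that iterates the single-subdivision case $k$ times would only produce the much weaker, doubly exponential bound $2^{2^k-1}(q+1)^{2^k}-1$, so a direct, global analysis of the $k$-fold subdivision is essential to reach the singly exponential bound in the lemma.
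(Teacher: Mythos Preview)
The paper does not prove this lemma; it is quoted verbatim from Dujmovi\'c and Wood~\cite{DW05} as a black box. So there is no ``paper's own proof'' to compare against, only the original argument in~\cite{DW05}. That argument does \emph{not} proceed by the direct rainbow-counting you outline. Instead, it passes through track layouts: a $q$-queue layout of $D$ is converted into a track layout with a bounded number of tracks, and then one repeatedly contracts half of the subdivision vertices while controlling the growth of the track number (this is the step the present paper alludes to in the proof of Theorem~\ref{thm:time}, where it says ``this operation is repeated a number of times that is logarithmic in the number of subdivision vertices''). The halving explains why the exponent is $2k$ rather than $2^k$, and the track-layout machinery is what makes the bookkeeping clean.

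Your proposal, by contrast, stays entirely on the queue side and tries a signature/pigeonhole argument on the subdivision paths. That is a legitimate strategy in spirit, but as written it is not a proof: you explicitly identify the ``main obstacle'' as the joint design of the signature and the selection rule, and you do not actually supply either one. Without a concrete signature alphabet of size $\tfrac12(2q+2)^{2k}$ and a proof that equal signatures force a nested family of $D$-edges, the argument is just a hope. In particular, the difficulty you name---subdivision vertices of $P_{e_i}$ escaping the interval $[u_i,v_i]$ and the paths zig-zagging---is exactly what the track-layout detour in~\cite{DW05} handles, and it is not obvious that a purely local per-path signature of the size you want can repair it. You also correctly observe that naively iterating the one-subdivision case $k$ times yields only a doubly exponential bound; this is precisely why~\cite{DW05} uses the halving trick rather than a one-vertex-at-a-time induction. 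If you want to complete your approach, you would need to either reinvent an analogue of that halving inside your signature framework or exhibit the signature explicitly; as it stands, the proposal has a genuine gap at its central step.
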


\subsection{Planar Drawings} A \emph{drawing} of a graph $G$ is a mapping of the vertices of $G$ to distinct points of the plane, and of the edges of $G$ to Jordan arcs connecting their corresponding endvertices but not passing through any other vertex. A drawing is \emph{planar} if no two edges intersect, except possibly at a common endvertex. A graph is \emph{planar} if it admits a planar drawing. A planar drawing subdivides the plane into topologically connected regions, called \emph{faces}. The infinite region is called the~\emph{outer~face}.

\subsection{Ordered Concentric Representations} Central in our approach are also the so-called \emph{ordered concentric representations}, which were recently studied by Pupyrev~\cite{Pup17}. An ordered concentric representation of a planar graph $G$ is a planar drawing of $G$ where the vertices  are located at concentric circles $\mathcal{C}_0,\mathcal{C}_1,\ldots,\mathcal{C}_{h-1}$ with decreasing radii centered at a point $c$ of the plane, except for a single vertex $r$, called the \emph{center} of the representation, which is located at point $c$. Recall that the \emph{graph-theoretic distance} between two vertices of $G$ is the number of edges in a shortest path connecting them. All vertices on circle $\mathcal{C}_i$ have graph-theoretic distance $h-i$ from $r$, for $i=0,1,\ldots,h-1$. It follows that each edge of $G$ either has both its endvertices at the same circle (\emph{level edge}) or at two consecutive circles (\emph{binding edge}). 
\begin{figure}
\centering
	\subcaptionbox{\centering \label{fig:octahedron}}{
	\includegraphics[height=2.92cm,page=1]{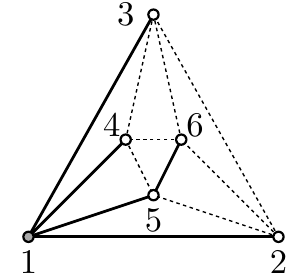}}
	\hfil
	\subcaptionbox{\centering \label{fig:representation}}{
	\includegraphics[height=2.92cm,page=2]{figs/representation}}
    \caption{%
    Illustration of:
    (a)~the octahedron graph with a BFS-tree (drawn bold) rooted at a specific vertex (colored gray), and 
    (b)~its corresponding ordered concentric representation~\cite{Pup17}.}
	\label{fig:orderedConcentric}
\end{figure}
Moreover, in an ordered concentric representation, the following three properties hold: 

\begin{enumerate}[(R.1)]

\item \label{r:level} Each level edge is drawn outside the circle on which its endvertices are located, 

\item \label{r:binding} Each binding edge consists of at most two segments, one required segment which is drawn between the two circles on which its endvertices are located and one optional segment which is drawn outside both of these circles, and

\item \label{r:root} Vertex $r$ is incident to the outer face.

\end{enumerate}

\noindent For an illustration refer to Figure~\ref{fig:orderedConcentric}. To compute an ordered concentric representation of a planar graph $G$, Pupyrev~\cite{Pup17} first computes a BFS-tree of $G$. The root of this tree is the center of the representation, while its edges are binding edges drawn between the two corresponding circles (see R.\ref{r:binding}). His result is summarized as follows.

\begin{lemma}[Pupyrev~\cite{Pup17}]
Given an $n$-vertex planar graph $G=(V,E)$, it is possible to compute in $\Oh(n)$ time an ordered concentric representation $\{\mathcal{C}_0,\mathcal{C}_1,\ldots,\mathcal{C}_{h-1}\}$ of $G$ with center $r \in V$, such that $h$ is the height of a breadth-first search tree of $G$ rooted at $r$.
\label{lem:pupyrev}
\end{lemma}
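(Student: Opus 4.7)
The plan is to build the representation from a planar embedding that guides the geometric placement. First, in linear time, compute a BFS tree $T$ of $G$ rooted at $r$, whose height is $h$ and which partitions $V$ into distance classes $L_0=\{r\},L_1,\ldots,L_h$, together with a planar embedding $\Pi$ of $G$ in which $r$ is incident to the outer face (via Hopcroft--Tarjan followed, if necessary, by a re-rooting of the combinatorial embedding). Then place $r$ at the center $c$ and assign the vertices of $L_j$ to the circle $\mathcal{C}_{h-j}$, so that property~R.\ref{r:root} is immediate and all points satisfy the distance condition of the definition.

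The core step is to determine the cyclic order of the vertices on each circle, which I would obtain by a DFS-like traversal of $T$ that respects $\Pi$. Since $T$ is a spanning tree of the embedded graph, for every non-root vertex $u$ the children of $u$ in $T$ occupy a contiguous arc of $\Pi$'s rotation around $u$; listing those children in the order dictated by $\Pi$ and concatenating these blocks along the cyclic order already chosen for $L_j$ yields a cyclic order for $L_{j+1}$. A standard property of BFS on connected graphs then implies that every non-tree edge either joins two vertices of the same level (a \emph{level edge}) or two vertices of consecutive levels (a \emph{binding edge}); together with the tree edges, which are binding by construction, this ensures that every edge can indeed be realized under the routing conventions of R.\ref{r:level}--R.\ref{r:binding}.

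Next, route the edges. Tree edges are binding and can be drawn with a single straight radial segment between their two circles. For a non-tree binding edge, draw the required segment from the shallower endpoint inward to its landing point on the deeper circle, and append an optional outer segment if the cyclic order does not already place the deeper endpoint adjacent to that landing point. For a level edge on $\mathcal{C}_{h-j}$, route it entirely in the annulus outside $\mathcal{C}_{h-j}$. All of the above---BFS, planar embedding, recursive ordering, and edge routing---can be carried out in $\Oh(n)$ time.

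The hard part will be proving that these routings can be realized simultaneously without crossings. The crux is that the cyclic orders on consecutive circles, together with the permitted outer detours, faithfully encode the rotation systems of $\Pi$ restricted to the regions between levels, so that planarity of $\Pi$ transfers to the constructed drawing. I would formalize this by induction on $j$ from $r$ outward: at each stage one shows that the partial drawing inside $\mathcal{C}_{h-j}$ is combinatorially equivalent to $\Pi$ restricted to the subgraph induced by $L_0\cup\cdots\cup L_j$ together with the non-tree edges incident to it, and that the arcs of the outer face of this partial drawing correspond exactly to the slots into which $L_{j+1}$ and its level edges are subsequently inserted. This invariant both justifies the optional-segment rule for non-tree binding edges and guarantees that level edges on $\mathcal{C}_{h-j}$ can be nested outside that circle without interfering with the binding edges to $L_{j+1}$.
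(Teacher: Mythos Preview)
The paper does not prove this lemma at all; it is quoted verbatim as a result of Pupyrev~\cite{Pup17} and used as a black box throughout. There is therefore nothing in the present paper to compare your sketch against.

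As for the sketch itself, the overall shape is plausible, but one step is not correct as stated. You assert that ``for every non-root vertex $u$ the children of $u$ in $T$ occupy a contiguous arc of $\Pi$'s rotation around $u$.'' This fails already for BFS trees of planar graphs: a vertex $u$ on level $j$ may have, in its cyclic rotation, tree-children at level $j{+}1$ interleaved with non-tree neighbours at level $j{+}1$ (vertices whose BFS parent is some other level-$j$ vertex) and with level-$j$ neighbours. Concatenating child blocks in the way you describe then does not yield a consistent cyclic order on $L_{j+1}$, and the inductive invariant you propose breaks at exactly this point. The actual construction (in Pupyrev, and in the earlier Di~Battista--Frati--Pach argument it refines) does not rely on contiguity of children; instead it exploits the optional outer segment allowed for binding edges (property~R.\ref{r:binding}) to reroute non-tree binding edges around intervening vertices, and orders each level by walking the faces of the embedding rather than the tree alone. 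Your sketch gestures at the optional segments but only after committing to the flawed ordering; the fix is to derive the cyclic orders from the planar embedding of the whole graph, not from $T$.
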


Given a spanning tree $T$ of a graph $G$ rooted at a vertex $r$, and a vertex $v$ of $G$, we denote by $\mbox{dist}(v)$ the graph-theoretic distance of $v$ from $r$ in $T$. Clearly, $\mbox{dist}(r)=0$, while for each leaf $v$ of $T$, it holds that $\mbox{dist}(v) \leq h(T)$, where $h(T)$ is the height of $T$. For a vertex $v$ of $G$, we refer to the value $h(T)-\mbox{dist}(v)$ as the \emph{layer} of $v$ in $T$, which we denote by $\ell(v)$. Note that given an ordered concentric representation $R=\{\mathcal{C}_0,\mathcal{C}_1,\ldots,\mathcal{C}_{h-1}\}$ of a graph $G$ computed using a BFS-tree $T$ of $G$, for each vertex that belongs to circle $\mathcal{C}_i$ in $R$, its layer in $T$ is $i$, for each $i=0,1,\ldots,h-1$.

\section{The Special Case: \fullBM Graphs}
\label{sec:subcase}
In this section we consider a special subfamily of planar graphs of degree $\Delta$, defined as follows. 

\begin{definition}\label{def:deltamg}
A \emph{\fullBM} graph $G=(V,E)$  consists of a $(\Delta-1)$-ary tree $T$ with vertex set $V$, rooted at a vertex $r \in V$, and a perfect matching $M$ on the leaves of $T$, such that there exists a planar drawing $\Gamma$ of $G$ with the following properties (refer to Figure~\ref{fig:special-case} for an illustration):

\begin{enumerate}[({P.}1)]
\item \label{p:leaf} all leaves of $T$ lie on a horizontal line $L:y=0$,
\item \label{p:mt} all edges of $T$ (of $M$) are drawn above (below, respectively) $L$ 
\item \label{p:r} vertex $r$ lies on the outer face of $\Gamma$, and 
\item \label{p:layers} the vertices of layer $\ell>0$ in $T$ are drawn on a horizontal line $L_\ell:y=\ell$.
\end{enumerate}
\end{definition}

Note that the root $r$ of $T$ has degree at most $\Delta-1$, while the leaves of $T$ have degree~$2$ in $G$. All other vertices of $G$ have degree at most $\Delta$. In the particular example depicted in Figure~\ref{fig:special-case} each internal vertex has degree exactly $\Delta$.  Hence, graph $G$ has maximum degree at most $\Delta$.  Our goal is to compute a queue layout of $G$  with at most $2\Delta-2$ queues. In this direction, it is worth observing that the treewidth of \fullBM graphs is not bounded; in particular, a suitable subdivision of a grid graph yields a \fullBM graph. Observe that drawing $\Gamma$ can be easily converted into an ordered concentric representation of $G$ with center $r$ in which all binding edges are part of $T$. 

First, we assign an integer value, called \emph{nesting-value}, to each edge of $M$. Recall that each edge of $M$ connects two vertices of $G$ that are leaves in $T$, and by Property~P.\ref{p:leaf} of Definition~\ref{def:deltamg} are along the horizontal line $L:y=0$ in the drawing $\Gamma$ of $G$. We denote the order in which the endvertices of the edges of $M$ appear along $L$ by $\prec_L$. Consider now an edge $e$ of $M$. Edge $e$ has nesting-value zero if there does not exist an edge $e'$ that nests $e$ with respect to $\prec_L$. Edge $e$ has nesting-value $i>0$ if the maximum nesting-value of all edges nesting $e$ is equal to $i-1$; see Figure~\ref{fig:special-case} for an illustration.

\begin{figure}
	\centering
	\includegraphics[width=0.7\textwidth]{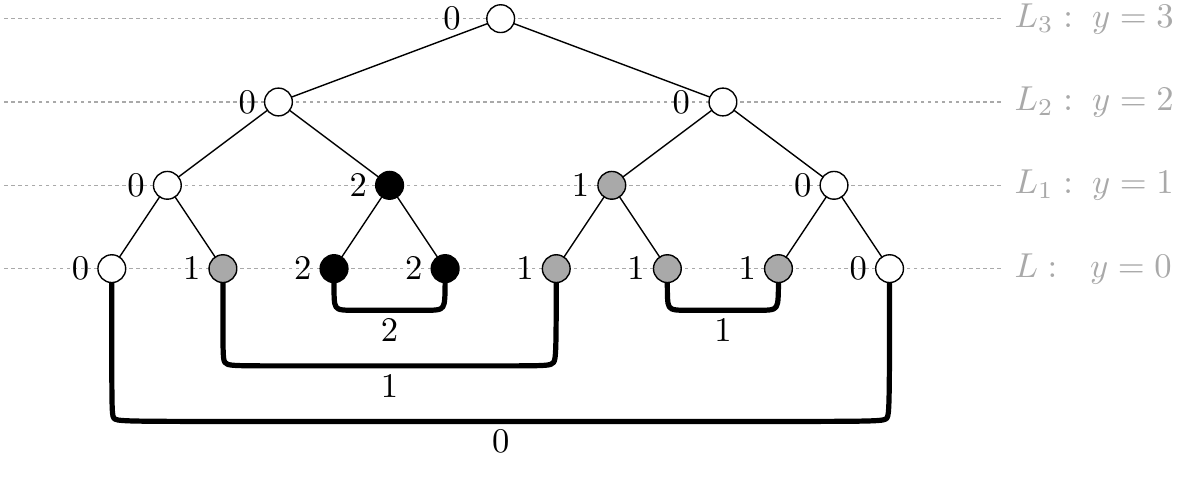}
	\caption{Illustration of a \fullBM graph $G$ with $\Delta=3$, in which
	the edges of $M$ are drawn bold below the horizontal line $L:\;y=0$.
	The label of each edge of $M$ corresponds to its nesting-value, 
	while the label of each vertex of $G$ corresponds to its matching-value. 
	Note that every internal vertex of the tree $T$ of $G$ has degree exactly $\Delta$ in this example, even though in general internal vertices of $T$ may have degree smaller than $\Delta$ (but greater than one). }
	\label{fig:special-case}
\end{figure}

Based on the nesting-values of the edges of $M$, we compute in a bottom-up traversal of $T$ an integer value for each vertex $v$ of $G$, the so-called \emph{matching-value} of $v$, $\mbox{mv}(v)$. For a vertex $v$ on layer $0$ of $T$, the matching-value of $v$ is the nesting-value of the unique edge of $M$ incident to it. For a vertex $v$ on layer $\ell \geq 1$ of $T$, the matching-value of $v$ equals the minimum matching-value of its neighbors in layer $\ell-1$ in $T$; refer to Figure~\ref{fig:special-case} for an illustration. In other words, the matching-value of vertex $v$ equals the minimum nesting-value of the edges of $M$ incident to the leaves of the subtree of $T$ rooted at $v$. In addition, the matching-values of any two consecutive leaves of $T$ along $L$ differ by at most one. Hence, if the leaves of a subtree rooted at a vertex $v$ have minimum matching-value $\alpha$ and maximum matching-value $\beta$, then for every value $m$ in $[\alpha,\beta]$ there exists at least one leaf of this subtree with matching-value $m$.

Finally, based on the matching-values, we partition the vertices of $G$ that belong to a certain layer of $T$ to \emph{layer-groups}. Formally, the layer-group $\mbox{g}(v)$ of a vertex $v$ of $G$ is defined as follows, where $(\Delta-1)^{\ell(v)}$ is the maximum number of leaves in the subtree of $T$ rooted at $v$:
\begin{equation}
\mbox{g}(v):= \left\lfloor \frac{\mbox{mv}(v)}{(\Delta-1)^{\ell(v)}} \right\rfloor.
\label{eq:layergroup}
\end{equation}
Observe that $\mbox{mv}(v) \in [ \mbox{g}(v)\cdot (\Delta-1)^\ell, (\mbox{g}(v)+1)\cdot (\Delta-1)^\ell )$; in particular, if $v$ belongs to layer $0$, $\mbox{mv}(v) = \mbox{g}(v)$. We further denote by $V_\ell^g$ the set of vertices of $G$ that belong to layer $\ell$ in $T$, and that are contained in layer-group~$g$. Remark that $\{V_\ell^g\}_{\ell,g}$ is a partition of the vertices of $G$. 

We are now ready to present the main result of this section.

\begin{lemma}
Every \fullBM graph has queue number at most $2\Delta-2$.
\label{lem:fullBM}
\end{lemma}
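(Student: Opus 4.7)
The plan is to construct a linear order $\prec$ of $V$ together with an explicit assignment of the edges of $G$ to $2\Delta-2$ queues, and then verify directly that no two edges in the same queue are nested. I would order the vertices primarily by layer $\ell(v)$ (ascending), breaking ties first by layer-group $\mbox{g}(v)$ (ascending) and finally by $x$-coordinate in the drawing $\Gamma$. This order respects the layered structure of $T$ while grouping vertices whose subtrees cover a common block of matching-values. For each tree edge $(u,v)$ with $u$ in layer $\ell$ and $v$ in layer $\ell+1$, I would assign it to the queue indexed by $\mbox{g}(u)-(\Delta-1)\,\mbox{g}(v)$, and I would place every matching edge into the single queue indexed by $2\Delta-3$.

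First, I would verify that the queue index of a tree edge always lies in $\{0,\ldots,2\Delta-3\}$. Since the subtree of $v$ has at most $(\Delta-1)^{\ell+1}$ leaves and their matching-values form a contiguous interval starting at $\mbox{mv}(v)$, one obtains $\mbox{mv}(v)\le\mbox{mv}(u)\le\mbox{mv}(v)+(\Delta-1)^{\ell+1}-1$; dividing by $(\Delta-1)^\ell$ and taking floors yields $0\le\mbox{g}(u)-(\Delta-1)\mbox{g}(v)\le 2\Delta-3$. In the boundary case $\ell=0$ this bound tightens to $2\Delta-4$, because $u$ is itself a leaf whose subtree contributes a single matching-value, giving $\mbox{mv}(u)\le\mbox{mv}(v)+(\Delta-1)-1$.

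Second, I would show that two tree edges sharing the same queue cannot nest. Because $\prec$ is layered, any two nested tree edges must span the same pair of consecutive layers $(\ell,\ell+1)$. Suppose that $(u_1,v_1)$ and $(u_2,v_2)$ share a queue and that $(u_1,v_1)$ nests $(u_2,v_2)$. Then $\mbox{g}(u_1)\le\mbox{g}(u_2)$ and $\mbox{g}(v_1)\ge\mbox{g}(v_2)$; combined with the queue equation $\mbox{g}(u_1)-(\Delta-1)\mbox{g}(v_1)=\mbox{g}(u_2)-(\Delta-1)\mbox{g}(v_2)$, this forces equality in both, so the $u_i$'s share a layer-group and the $v_i$'s share a layer-group. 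The rainbow then reduces to an $x$-coordinate comparison, and planarity of $\Gamma$ yields a contradiction: the subtrees of distinct layer-$(\ell+1)$ vertices occupy disjoint $x$-intervals, and since $u_i$ lies in $v_i$'s subtree, the $x$-order of $v_1,v_2$ must agree with that of $u_1,u_2$, which is incompatible with the rainbow.

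Third, I would handle matching edges. Two matching edges of different nesting-values lie in disjoint intervals of $\prec$, because for leaves $\mbox{g}=\mbox{mv}=$ nesting-value, so they cannot nest. Two matching edges of the same nesting-value do not nest in $\prec_L$ by definition of the nesting-value, and this is preserved inside each layer-$0$ group because the order there is inherited from $\prec_L$. Finally, a matching edge can only be nested inside a tree edge that spans the gap $(0,1)$, and by the first step any such tree edge carries queue index at most $2\Delta-4$, strictly less than the index $2\Delta-3$ reserved for matching edges. The main obstacle will be the case distinction at $\ell=0$ in the first step: without it the naive counting yields $2\Delta-1$ queues and misses the stated bound by one. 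The second step is the structural heart of the argument, upgrading layer-group equality into an $x$-coordinate contradiction via the planarity of $\Gamma$.
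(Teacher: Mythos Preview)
Your proposal is correct and follows essentially the same approach as the paper: the same linear order (layer, then layer-group, then $x$-coordinate in $\Gamma$), the same queue assignment for tree edges (indexed by $\mbox{g}(u)-(\Delta-1)\mbox{g}(v)$, which is exactly the paper's $k$), and the same planarity-based contradiction for nesting within a queue.

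The only substantive difference is in the bookkeeping of the first step. The paper proves the sharper bound $\mbox{mv}(u)\le\mbox{mv}(v)+(\Delta-2)(\Delta-1)^\ell$ (using that $v$ has at most $\Delta-1$ children, each contributing at most $(\Delta-1)^\ell$ consecutive leaf matching-values), which yields a queue index in $\{0,\ldots,2\Delta-4\}$ for \emph{all} layers; matching edges then sit alone in a separate queue. You instead use the looser bound $\mbox{mv}(u)\le\mbox{mv}(v)+(\Delta-1)^{\ell+1}-1$, obtain index range $\{0,\ldots,2\Delta-3\}$ for $\ell\ge 1$, and compensate by observing that at $\ell=0$ the range tightens to $\{0,\ldots,2\Delta-4\}$, so queue $2\Delta-3$ can safely be shared between matching edges and higher-layer tree edges (which are disjoint in $\prec$). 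Both routes arrive at $2\Delta-2$ queues; the paper's sharper inequality makes the queue partition cleaner and avoids the extra case analysis, while your sharing argument is a legitimate shortcut. One minor point: your stated reason for the $\ell=0$ tightening (``$u$ is itself a leaf whose subtree contributes a single matching-value'') is not quite the operative fact---the improvement really comes from $g(u)=\mbox{mv}(u)$ at $\ell=0$, so no precision is lost to the floor.
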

\begin{proof}
Let $G$ be a \fullBM graph, and let $\{V_\ell^g\}_{\ell,g}$ be a partition of the vertex-set of $G$ as described above. 
Recall that $\Gamma$ is a planar drawing of $G$ satisfying Properties~P.\ref{p:leaf}--P.\ref{p:layers} of Definition~\ref{def:deltamg}.
We construct a linear order $\prec$ of the vertices of $G$ as follows. For every two distinct vertices $u$ and $v$ of $G$, we have that $u \prec v$ if and only if one of the following conditions holds:
\begin{enumerate}[(C.1)]
\item\label{c1:linear} $\ell(u)<\ell(v)$, or 
\item\label{c2:linear} $\ell(u)=\ell(v)$ and $\mbox{g}(u)<\mbox{g}(v)$, or
\item\label{c3:linear} $\ell(u)=\ell(v)=\ell$, $\mbox{g}(u)=\mbox{g}(v)$ and $u$ is to the left of $v$ along $L_\ell$ in $\Gamma$.
\end{enumerate}

Observe that all edges of $M$ form a necklace in $\prec$. Indeed, if $e$ is an edge of $M$, then the endvertices of $e$ belong to the same layer-group. Further, since $e$ only nests matching edges of larger nesting value with respect to $\prec_L$, its endpoints are consecutive in $\prec$. Therefore, we can assign all edges of $M$ in one queue, say $\mathcal{Q}_0$. 

The remaining edges of $G$ belong to $T$. As a result, their endvertices belong to consecutive layers of $T$. Let $u$ be a vertex of $G$ that belongs to layer~$\ell+1$ in $T$, and assume that $u$ is in layer-group~$g$, i.e., $u\in V_{\ell+1}^g$. Let also $v_1,v_2,\dots,v_d$ be the neighbors of $u$ in layer $\ell$ (not necessarily in this left-to-right order), where $d \le \Delta-1$. Without loss of generality, we assume that $\mbox{mv}(v_1)\leq \mbox{mv}(v_2)\leq \dots \leq \mbox{mv}(v_{d})$. This implies that $\mbox{mv}(u)=\min\{\mbox{mv}(v_i) \mid i=1,\ldots,d\} =\mbox{mv}(v_1)$. Since $u$ is in $V_{\ell+1}^g$, it follows by (\ref{eq:layergroup}) that 
\begin{equation}
\mbox{mv}(v_1) = \mbox{mv}(u) \in \left[g\cdot (\Delta-1)^{\ell+1},\;(g+1)\cdot (\Delta-1)^{\ell+1}\right).
\label{eq:mv_u}
\end{equation}
Hence
\begin{equation}
\begin{split}
g\cdot(\Delta-1)\cdot (\Delta-1)^\ell& = g\cdot (\Delta-1)^{\ell+1}\\ 
~ &\leq \mbox{mv}(v_1)\\ 
~ &< (g+1)\cdot (\Delta-1)^{\ell+1}\\
~ &= (g+1)\cdot(\Delta-1) \cdot (\Delta-1)^\ell.
\end{split}
\label{eq:mv-value-v1}
\end{equation} 
It follows that for vertex $v_1$ of layer $\ell$ we have
\[
 g\cdot(\Delta-1) \leq \mbox{g}(v_1) < (g+1)\cdot(\Delta-1).
\]
Alternatively,
\[
 v_1\in V_\ell^{g\cdot(\Delta-1)}\cup \dots \cup V_\ell^{(g+1)\cdot(\Delta-1)-1} = \bigcup_{k=0}^{\Delta-2}V_\ell^{g\cdot(\Delta-1)+k}.
\]

\begin{figure*}[t]
	\centering
	\includegraphics[width=\textwidth,page=2]{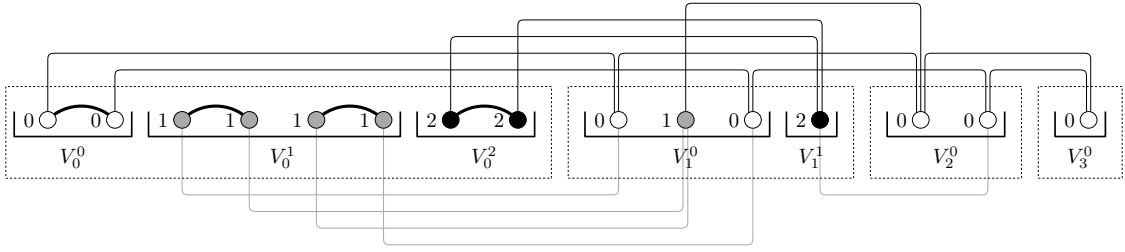}
    \caption{%
    The queue layout for the \fullBM graph $G$ of Figure~\ref{fig:special-case} produced by our algorithm. 
    This layout consists of four queues drawn black-bold, black-solid and gray (denoted by $\mathcal{Q}_0$, $\mathcal{Q}_1$ and $\mathcal{Q}_2$ in the algorithm).
    The vertices of $G$ that belong to the same layer in $T$ are drawn within the same dotted rectangle. }
	\label{fig:queue-layout}
\end{figure*}

\noindent Now consider vertex $v_i$, such that $1 < i \leq d$. 
We claim that
\begin{equation}
g\cdot(\Delta-1) \leq \mbox{g}(v_i) < (i-1) + (g+1)\cdot(\Delta-1)
\label{eq:claim-for-vi}
\end{equation}

\noindent Recall that the matching-values of two consecutive leaves differ by at most 1, and that $\mbox{mv}(v_1)\leq \mbox{mv}(v_2)\leq \dots \leq \mbox{mv}(v_{d})$. It follows that  $\mbox{mv}(v_i)$ and $\mbox{mv}(v_{i-1})$ differ by at most the number of leaves in the subtree rooted at $v_{i-1}$, that is 
\begin{equation}
\mbox{mv}(v_i)\leq (\Delta-1)^\ell + \mbox{mv}(v_{i-1})\notag,
\end{equation}
\noindent which applied $i-1$ times, gives
\begin{equation}
\mbox{mv}(v_i) \leq (i-1)(\Delta-1)^\ell +\mbox{mv}(v_1).
\label{eq:mv_v}
\end{equation}
\noindent \eqref{eq:mv-value-v1} and~\eqref{eq:mv_v} give
\begin{equation}\label{eq:mv_vi}
\begin{split}
\mbox{mv}(v_i) & < (i-1)(\Delta-1)^\ell + (g+1)\cdot(\Delta-1)\cdot(\Delta-1)^\ell\\
~ & = \big((i-1) + (g+1)\cdot(\Delta-1)\big)\cdot(\Delta-1)^\ell.\notag
\end{split}
\end{equation}
\noindent Since $\mbox{mv}(v_1) \leq \mbox{mv}(v_i)$, we conclude that
\begin{equation}
 g\cdot(\Delta-1)\cdot (\Delta-1)^\ell \leq \mbox{mv}(v_i) < \big((i-1) + (g+1)\cdot(\Delta-1)\big)\cdot(\Delta-1)^\ell\notag,
\end{equation}
which implies our initial claim in~\eqref{eq:claim-for-vi}, as desired. Using~\eqref{eq:claim-for-vi} and the fact that $i \leq d \leq \Delta-1$, we get for $i = 1,\ldots, d$,
\begin{equation}
 v_i \in V_\ell^{g\cdot(\Delta-1)}\cup \dots \cup V_\ell^{(g+2)\cdot(\Delta-1)-2} = \bigcup_{k=0}^{2\Delta-4}V_\ell^{g\cdot(\Delta-1)+k}.\label{eq:summary}
\end{equation}

Recall that we have already assigned all edges of $G$ that belong to $M$ in a single queue denoted by $\mathcal{Q}_0$. 
We are now ready to describe how to assign each edge of $G$ that is in $T$ to the remaining $2\Delta - 3$ queues. For each layer $\ell>0$ and for each layer-group $g$ of layer $\ell$ in $T$, we assign the edges between $V_{\ell+1}^g$ and $V_\ell^{g\cdot(\Delta-1)+k}$ to queue $\mathcal{Q}_{k+1}$, for $k=0,\ldots, 2\Delta-4$; for an illustration refer to Figure~\ref{fig:queue-layout}. By~(\ref{eq:summary}), all edges of $G$ have been assigned to one of the $2\Delta - 2$ queues $\mathcal{Q}_0,\ldots,\mathcal{Q}_{2\Delta-3}$. 

To complete the proof of the lemma, it remains to show that no two independent edges of the same queue are nested. As already mentioned, all edges of $\mathcal{Q}_0$ form a $|\mathcal{Q}_0|$-necklace and therefore do not nest. For some $0 < k \leq 2\Delta-3$, consider two independent edges $(x_1,y_1)$ and $(x_2,y_2)$ assigned to the same queue $\mathcal{Q}_{k}$, such that $x_1 \prec y_1$ and $x_2 \prec y_2$. Without loss of generality, we may further assume that $x_1 \prec x_2$. We will prove that $y_1 \prec y_2$, which implies that edges $(x_1,y_1)$ and $(x_2,y_2)$ are not nested. 

For the sake of contradiction, assume that $x_1 \prec x_2 \prec y_2 \prec y_1$, that is, $(x_1,y_1)$ nests $(x_2,y_2)$ with respect to $\prec$. Since $x_1 \prec x_2$, by Conditions~C.\ref{c1:linear}-C.\ref{c3:linear}, we have that $\ell(x_1)\leq\ell(x_2)$. Since the endvertices of each edge in $T$ belong to consecutive layers and since $(x_1,y_1)$ and $(x_2,y_2)$ belong to $T$, it follows that $\ell(y_1)=\ell(x_1)+1$ and $\ell(y_2)=\ell(x_2)+1$. Since $y_2 \prec y_1$, it follows that $\ell(y_2)\leq\ell(y_1)$ and hence $\ell(x_2)+1 \leq \ell(x_1)+1$. Since $\ell(x_1)\leq\ell(x_2)$, we may conclude that $\ell(x_1)=\ell(x_2)=\ell$ and $\ell(y_1)=\ell(y_2)=\ell+1$, for some $\ell\geq0$.

Let $\mbox{g}(y_1) = g_1$ and $\mbox{g}(y_2) = g_2$. Since both $(x_1,y_1)$ and $(x_2,y_2)$ are assigned to queue $\mathcal{Q}_{k}$, it follows that $\mbox{g}(x_1) = g_1\cdot(\Delta-1)+(k-1)$ and $\mbox{g}(x_2) = g_2\cdot(\Delta-1)+(k-1)$. Since $x_1 \prec x_2$, by Conditions~C.\ref{c2:linear} and~C.\ref{c3:linear}, we conclude that $g_1\cdot(\Delta-1)+(k-1) = \mbox{g}(x_1) \leq \mbox{g}(x_2) = g_2\cdot(\Delta-1)+(k-1)$, which implies that $g_1\leq g_2$. On the other hand, since $y_2 \prec y_1$, by Conditions~C.\ref{c2:linear} and~C.\ref{c3:linear} we similarly get $g_2\leq g_1$. Therefore, $g_1=g_2$ and consequently $\mbox{g}(x_1) = \mbox{g}(x_2)$.

Now since $x_1 \prec x_2$, by Condition~C.\ref{c3:linear} it follows that $x_1$ is to the left of $x_2$ along $L_\ell$ in the drawing $\Gamma$ of $G$. Similarly, since $\mbox{g}(y_1) = \mbox{g}(y_2)$ and $y_2 \prec y_1$, it follows by Condition~C.\ref{c3:linear} again that $y_2$ is to the left of $y_1$ along $L_{\ell+1}$ in the drawing $\Gamma$ of $G$. However, this implies that edges $(x_1,y_1)$ and $(x_2,y_2)$ cross in $\Gamma$, which is a contradiction to the fact that $\Gamma$ is a planar drawing of $G$. Therefore, $y_1 \prec y_2$, and edges $(x_1,y_1)$ and $(x_2,y_2)$ are not nested, as desired, which concludes the proof.
\end{proof}

\section{The Generalization: Planar Graphs of Maximum Degree~$\Delta$}
\label{sec:general}
In this section, we use the approach presented in the previous section to general planar graphs of maximum degree $\Delta$. At a high-level, our approach consists of three main steps. Given a planar graph $G$ of maximum degree $\Delta$, we first compute an auxiliary planar graph $G_1$ of maximum degree $\Delta$ by subdividing some of the edges of $G$ a constant number of times. Then, we exploit structural properties of graph $G_1$ to obtain a \fullBM graph $G_2$ of maximum degree $\Delta$ by replacing some of the vertices of $G_1$ with appropriately-defined \fullBM instances. It follows, by Lemma~\ref{lem:fullBM}, that the queue number of $G_2$ is at most $2\Delta-2$. In a third step, we show that a queue layout of $G$ can be obtained from a queue layout of $G_2$ by introducing a number of additional queues that is polynomial in $\Delta$, thus proving Theorem~\ref{thm:main}.

\begin{figure*}
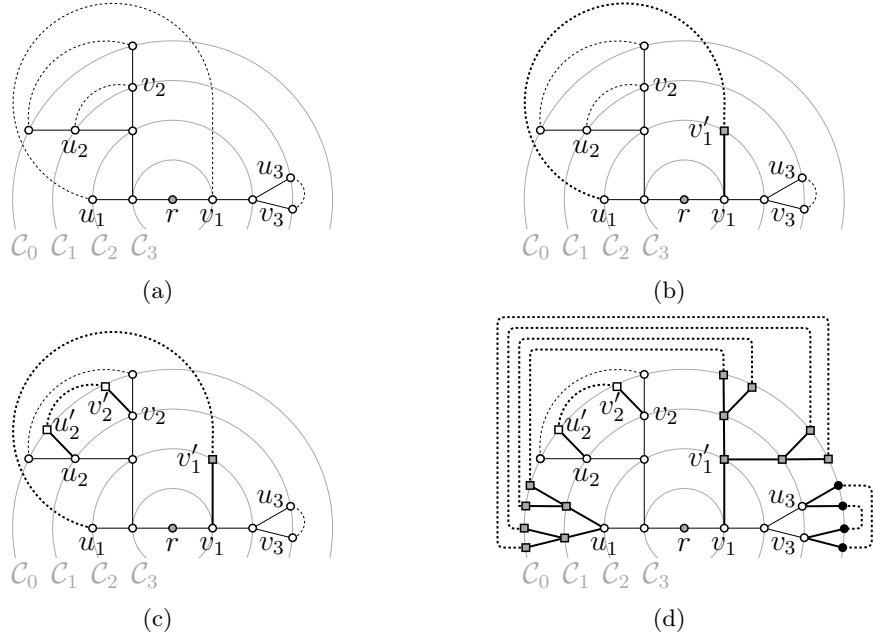

\centering
	\subcaptionbox{\centering \label{fig:subdivisions-1}}{
	\includegraphics[width=0.35\textwidth,page=3]{figs/representation}}
	\hfil
	\subcaptionbox{\centering \label{fig:subdivisions-2}}{
	\includegraphics[width=0.35\textwidth,page=4]{figs/representation}}
	\subcaptionbox{\centering \label{fig:subdivisions-3}}{
	\includegraphics[width=0.35\textwidth,page=5]{figs/representation}}
    \hfil
	\subcaptionbox{\centering \label{fig:subdivisions-4}}{
	\includegraphics[width=0.35\textwidth,page=6]{figs/representation}}
    \caption{%
    Illustration of different operations performed by our algorithm. 
    The starting configuration is the ordered concentric representation of a planar graph of degree $3$ illustrated in~(a), in which     
    $(u_1,v_1)$ is a binding edge, while
    $(u_2,v_2)$ and $(u_3,v_3)$ are level edges.}
	\label{fig:subdivisions}
\end{figure*}

 \subsection{Step 1: Construction of the auxiliary planar graph $\mathbf{G_1}$}\label{step:g1} 
 First, let us argue that we may assume without loss of generality that $G$ has minimum degree at least~two. Note that Theorem~\ref{thm:main} clearly holds for $\Delta \leq 2$, as all graphs of maximum degree at most~two have queue number one~\cite{HR92}. So, assume for the remainder that $\Delta \geq 3$. Suppose that $v$ is a vertex of degree~one in $G$. We introduce two new vertices $v_1$ and $v_2$, and three new edges $(v,v_1)$, $(v,v_2)$ and $(v_1,v_2)$ in $G$. It follows that vertex $v$ has degree~three and each of the two introduced vertices $v_1$ and $v_2$ has degree~two. By applying this procedure to all degree-$1$ vertices in $G$, we obtain a planar supergraph $G'$ of $G$ with minimum degree at least~two and maximum degree $\Delta$. Since $G$ is a subgraph of $G'$ the queue number of $G$ is at most the queue number of $G'$. So, in the following we will assume that every vertex in $G$ has degree at least~two.

We are now ready to describe how to construct graph $G_1$ from graph $G$. If $G$ has no degree-$2$ vertex, then we subdivide an edge of $G$ once, in order to introduce such a degree-$2$ vertex. As a consequence, in the following we will assume that $G$ contains at least one degree-$2$ vertex, which we denote by~$r$. Let $R=\{\mathcal{C}_0,\mathcal{C}_1,\ldots,\mathcal{C}_{h-1}\}$ be an ordered concentric representation of $G$ centered at $r$. Let $T$ be the BFS-tree of $G$ that was used in order to compute $R$; refer to Section~\ref{sec:preliminaries}. Recall that an edge of $G$ that is not in~$T$, is either a level edge (if it connects two vertices of the same level in $T$) or a binding edge (if it connects two vertices on consecutive levels in $T$).

We proceed by subdividing each binding edge of $G$ that does not belong to $T$ (if any) once. By this operation, each binding edge of $G$ is split into a level edge and an edge that can be assigned to tree $T$. For an illustration of this operation refer to the binding edge $(u_1,v_1)$ in Figure~\ref{fig:subdivisions-1}, which is subdivided once by introducing vertex $v_1'$ in Figure~\ref{fig:subdivisions-2}; as a result, the edge $(v_1,v_1')$ is binding and part of $T$, while the edge $(u_1,v_1')$ is level. Additionally, we subdivide each level edge $(u,v)$ twice, if $u$ or $v$ has degree greater than two. This yields three edges $(u,u')$, $(u',v')$ and $(v',v)$, the first and last of which we assign to the tree $T$, while the middle edge $(u',v')$ becomes a level edge. This guarantees that the endvertices of all level edges have degree two. For an illustration refer to the level edge $(u_2,v_2)$ in Figure~\ref{fig:subdivisions-2} whose endvertices have degree three; by subdividing this edge twice with vertices $u_2'$ and $v_2'$ in Figure~\ref{fig:subdivisions-3}, we guarantee that two of the newly formed edges, that is, $(u_2,u_2')$ and $(v_2,v_2')$, become binding and part of $T$, while the middle edge $(u_2',v_2')$ becomes a level edge whose endvertices have degree two. The resulting subdivision of $G$ is graph $G_1$. Note that each edge of $G$ is subdivided at most three times in order to obtain $G_1$. 

\subsection{Step 2: Construction of the \fullBM graph $\mathbf{G_2}$}\label{step:g2} Based on graph $G_1$, we now describe how to construct graph $G_2$. In this step, we need to guarantee one additional property of \fullBM graphs, that is, all level edges of a \fullBM graph belong to layer~$0$. Suppose that in the concentric representation of $G_1$ there exists a level edge $(u,v)$ at layer $\ell$, where $0<\ell<h$. We create two complete $(\Delta-1)$-ary trees of height $\ell$, say $T_u$ and $T_v$, and we identify their roots with vertices $u$ and $v$, respectively. Since the height of each of $T_u$ and $T_v$ is $\ell$, it follows that all leaves of $T_u$ and all leaves of $T_v$ can be placed along $\mathcal{C}_0$ consecutively while preserving planarity. We replace edge $(u,v)$ by $(\Delta-1)^{\ell}$ edges forming a matching $M_{u,v}$, such that the $i$-th leaf of $T_u$ from the left along $\mathcal{C}_0$ is connected to the $i$-th leaf of $T_v$ from the right along $\mathcal{C}_0$. Observe that the edges of $M_{u,v}$ form a $(\Delta-1)^{\ell}$-rainbow in $\prec_L$. For an illustration of this operation, refer to edges $(u_1,v_1')$ and $(u_3,v_3)$ in Figure~\ref{fig:subdivisions-3}, which are two level edges that do not have their endvertices along the outermost circle $\mathcal{C}_0$. Edge $(u_1,v_1')$ is replaced by the two gray-colored binary trees $T_{u_1}$ and $T_{v_1'}$ of height two in Figure~\ref{fig:subdivisions-4}, while the edge $(u_3,v_3)$ is replaced by the two black-colored binary trees $T_{u_3}$ and $T_{v_3}$ of height one. Also, the four edges of $M_{u_1,v_1'}$ and the two edges of $M_{u_3,v_3}$ are drawn dotted with mostly rectilinear segments in Figure~\ref{fig:subdivisions-4}. Once we apply the aforementioned procedure to all level edges of $G_1$, we obtain graph $G_2$ together with an ordered concentric representation $R_2$, in which all level edges have their endvertices along the outermost circle $\mathcal{C}_0$. Since we assumed that $G$ has minimum degree at least~two, all leaves of $T$ lie on $\mathcal{C}_0$.

We now claim that $R_2$ can be converted into a drawing $\Gamma$ of $G_2$ that satisfies Properties P.\ref{p:leaf}--P.\ref{p:layers} of Definition~\ref{def:deltamg}. Since $r$ belongs to the outer face of $R_2$, there is a curve $\mathcal C$ that starts at $r$, cuts all circles of $R_2$ once, and crosses no edge of $G_2$. Hence, we can use $\mathcal C$ to cut all the circles of $R_2$ and stretch $R_2$ so that each circle becomes a line segment by preserving the planarity of the drawing. In other words, we can obtain the drawing $\Gamma$ of $G_2$ through a suitable homeomorphic transformation of $R_2$. Properties P.\ref{p:leaf} and~P.\ref{p:layers} hold by construction in $\Gamma$, while Properties P.\ref{p:mt} and~P.\ref{p:r} follow from Properties R.\ref{r:level} and~R.\ref{r:root} of the ordered concentric representation $R_2$, respectively. It follows that graph $G_2$ is a \fullBM graph. 

\subsection{Step 3: Construction of a queue layout of $\mathbf{G}$}\label{step:queue} In this step we describe how to construct a queue layout of $G$ with $O(\Delta^6)$ queues by exploiting structural properties of $G_1$ and $G_2$. First we observe that since $G_2$ is a \fullBM graph, by Lemma~\ref{lem:fullBM} it admits a queue layout $\mathcal{L}_2$ with $2\Delta-2$ queues $\mathcal{Q}_0$, $\mathcal{Q}_1$, $\ldots$, $\mathcal{Q}_{2\Delta-3}$. Also, graph $G_2$ contains a subdivision of $G$ as an induced subgraph. 

Next we derive a queue layout $\mathcal{L}_1$ for $G_1$ from the queue layout $\mathcal{L}_2$ for $G_2$. Recall that $G_2$ was obtained from $G_1$ by replacing each level edge $(u,v)$ of $G_1$ by two complete $(\Delta-1)$-ary trees $T_u$ and $T_v$, and by a set $M_{u,v}$ of matching edges. For the desired queue layout $\mathcal{L}_1$ of $G_1$ we order the vertices of $G_1$ according to their ordering in $\mathcal{L}_2$. For every edge of $G_1$ we assign the same queue as in $\mathcal{L}_2$, provided that this edge is also an edge of $G_2$. Otherwise, such an edge is a level edge of $G_1$ and we assign it to the queue $\mathcal{Q}_0$. Thus, in queue layout $\mathcal{L}_1$, all level edges are assigned to queue $\mathcal{Q}_0$, while queues $\mathcal{Q}_1,\ldots,\mathcal{Q}_{2\Delta-3}$ contain all edges of the BFS-tree $T$.

It remains to show that no two (level) edges in $\mathcal{Q}_0$ nest. Recall that the ordering of vertices is inherited from queue layout $\mathcal{L}_2$ and hence satisfies Conditions C.\ref{c1:linear}--C.\ref{c3:linear}. For any level $(u,v)$ in $G_1$ we have $\ell(u) = \ell(v)$. Moreover, the set of matching edges in $G_2$ incident to the leaves of $T_u$ is given by $M_{u,v}$, and the same holds for $T_v$. This implies that $\mbox{mv}(u)=\mbox{mv}(v) = \min\{ \text{nesting-value of $e$} \mid e \in M_{u,v}\}$ and consequently $\mbox{g}(u) = \mbox{g}(v)$ by~\eqref{eq:layergroup}. Now suppose for the sake of contradiction that level edge $(u_1,v_1)$ nests level edge $(u_2,v_2)$ with $u_1 \prec u_2 \prec v_2 \prec v_1$. By Condition~C.\ref{c1:linear} it follows that $\ell(u_1) \leq \ell(u_2) \leq \ell(v_2) \leq \ell(v_1)$ and as $\ell(u_1) = \ell(v_1)$, all four vertices have the same level $\ell$. Secondly, in $G_2$ all edges of $M_{u_1,v_1}$ nest all edges of $M_{u_2,v_2}$. Hence, $\mbox{mv}(v_2) - \mbox{mv}(v_1) = |M_{u_2,v_2}| = (\Delta-1)^{\ell}$, i.e., the difference between the minimum nesting-value of edges in $M_{u_2,v_2}$ and the minimum nesting-value of edges in $M_{u_1,v_1}$ is $(\Delta-1)^{\ell}$. This however, by~\eqref{eq:layergroup} gives $\mbox{g}(v_2) = \mbox{g}(v_1) + 1$, which contradicts the fact that $v_2 \prec v_1$ according to Condition~C.\ref{c2:linear}.

So far, we constructed a queue layout $\mathcal{L}_1$ of graph $G_1$ with $2\Delta-2$ queues. As $G_1$ is obtained from $G$ by subdividing each edge at most three times, we can apply Lemma~\ref{lem:subdivision} with $k=3$ and $q=2\Delta-2$, and conclude that there is a queue layout $\mathcal{L}$ of $G$ with at most $32(2\Delta-1)^{6}-1$ queues. Hence, $\Qn(G) \leq 32(2\Delta-1)^{6}-1$, which concludes the proof of Theorem~\ref{thm:main}.

\section{Time Complexity}
\label{sec:timeComplexity}

In this section, we analyze the time complexity of our algorithm to construct a queue layout with at most $32(2\Delta-1)^6-1$ queues for a given planar graph $G$ of maximum degree $\Delta$. First assume that the input graph is a \fullBM graph, we prove the following lemma.

\begin{lemma}\label{lem:fullBMtime}
Every \fullBM graph $G=(V,E)$ of maximum degree $\Delta$ admits a queue layout with at most $2\Delta-2$ queues, which can be computed in $\Oh(|V|)$ time.
\end{lemma}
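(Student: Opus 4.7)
The plan is to show that every step of the construction in the proof of Lemma~\ref{lem:fullBM} can be implemented in $O(|V|)$ time, assuming that the \fullBM graph is presented together with its drawing $\Gamma$ (so the tree $T$, the matching $M$, and the left-to-right order of the vertices on each line $L_\ell$ are given). Since $G$ is planar, $|E| = O(|V|)$, so edge-level traversals are also linear.

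First, I would compute the nesting-values of the edges of $M$ by a single left-to-right sweep of the endpoints along $L$ using a stack: push when meeting a left endpoint of a matching edge, pop on the right endpoint, and record the nesting-value as the current stack depth (the matching is non-crossing because $\Gamma$ is planar, so the stack discipline is valid). This takes $O(|V|)$ time. Next, I would compute the matching-value $\mbox{mv}(v)$ of every vertex by a single post-order traversal of $T$: leaves inherit the nesting-value of their unique incident matching edge, and each internal vertex takes the minimum over its children in constant time per vertex, again $O(|V|)$ in total. The layers $\ell(v)$ come for free from the BFS structure used to build $T$, and after precomputing the powers $(\Delta-1)^\ell$ for $\ell = 0,\ldots,h-1$ once in $O(h) = O(|V|)$ time, the layer-group $\mbox{g}(v) = \lfloor \mbox{mv}(v)/(\Delta-1)^{\ell(v)}\rfloor$ is evaluated in $O(1)$ per vertex.

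The subtle step, and the only place where naive implementation would cost $O(|V|\log |V|)$, is producing the linear order $\prec$ defined by Conditions C.\ref{c1:linear}--C.\ref{c3:linear}, which is the lexicographic order on triples $(\ell(v),\mbox{g}(v),\text{position in }L_{\ell(v)})$. My plan is to process the layers in increasing order of $\ell$; within layer $\ell$, scan the vertices in their natural left-to-right order along $L_\ell$ (as given by $\Gamma$) and distribute them into buckets indexed by $\mbox{g}(v)$, then output the buckets in order. The matching-values lie in $[0,|M|)$, so at layer $\ell$ the layer-group indices lie in $[0,\lceil |M|/(\Delta-1)^\ell\rceil)$; bucket-sorting layer $\ell$ therefore costs $O(n_\ell + |M|/(\Delta-1)^\ell)$, where $n_\ell$ is the number of vertices at that layer. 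Summing over $\ell$ yields
\[
\sum_{\ell \ge 0}\Bigl(n_\ell + \tfrac{|M|}{(\Delta-1)^\ell}\Bigr) \;=\; O(|V|) + |M|\cdot O(1) \;=\; O(|V|),
\]
since the second sum is a geometric series (for $\Delta \ge 3$, the only nontrivial case given Theorem~\ref{thm:main} for $\Delta \le 2$).

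Finally, assigning each edge of $G$ to a queue is trivial once $\prec$ is available: matching edges all go to $\mathcal{Q}_0$, and every tree edge $(u,v)$ with $\ell(u) = \ell(v) + 1$ is assigned to $\mathcal{Q}_{k+1}$ where $k = \mbox{g}(u) - \mbox{g}(v)\cdot(\Delta-1)$, which is computed in constant time per edge and totals $O(|E|) = O(|V|)$. The main potential obstacle is the bucket sort analysis above; everything else reduces to standard linear-time tree/graph traversals. Combining these steps yields the $O(|V|)$ bound, while correctness and the $2\Delta-2$ queue bound are inherited directly from Lemma~\ref{lem:fullBM}.
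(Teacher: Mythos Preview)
Your proof is correct and follows essentially the same approach as the paper: compute nesting-values, then matching-values via a bottom-up traversal, then layer-groups, then the linear order layer by layer, then the queue assignment, each in linear time. The paper's proof is in fact terser than yours on the ordering step; it asserts that ``a single iteration through the vertices of each layer $\ell$ \ldots\ in the order that they appear along $L_\ell$'' suffices, without spelling out how Condition~C.\ref{c2:linear} is handled, whereas your bucket-sort argument with the geometric-series bound on the total number of buckets makes this explicit and is a nice addition. One small slip: in your queue-assignment formula the roles of $u$ and $v$ are swapped --- with $\ell(u)=\ell(v)+1$ you want $k=\mbox{g}(v)-\mbox{g}(u)\cdot(\Delta-1)$ --- but this does not affect the running-time analysis.
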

\begin{proof}
Let $G=(V,E)$ be a \fullBM graph of maximum degree $\Delta$. By Lemma~\ref{lem:fullBM}, graph $G$ admits a queue layout with at most $2\Delta-2$ queues. The time complexity of the algorithm used to construct such a queue layout is dominated by the computation of the nesting-values of the edges of $M$ and of the matching-values of the vertices of $G$. The former can be easily accomplished in $\Oh(|M|)$ time based on the drawing $\Gamma$ of $G$, while the latter in $\Oh(|V|)$ by a bottom-up traversal of $T$. Having computed these values, the calculation of the layer-group of each vertex can be done in $\Oh(|V|)$ time in total. The linear order can be determined in $\Oh(|V|)$ time by a single iteration through the vertices of each layer $\ell$ of $T$ in the order that they appear along $L_\ell$; recall Conditions~C.\ref{c1:linear}--C.\ref{c3:linear}. Finally, the assignment of the edges of $G$ into queues can be performed in $\Oh(|E|)$ time, since for each edge the corresponding queue is determined based on the layer-groups of its endvertices. Hence, the algorithm supporting Lemma~\ref{lem:fullBM} runs in $\Oh(|V|+|E|)$ time, which is in $\Oh(|V|)$, since $G$ is planar and hence $|E| = \Oh(|V|)$.
\end{proof}

\begin{figure}[t]
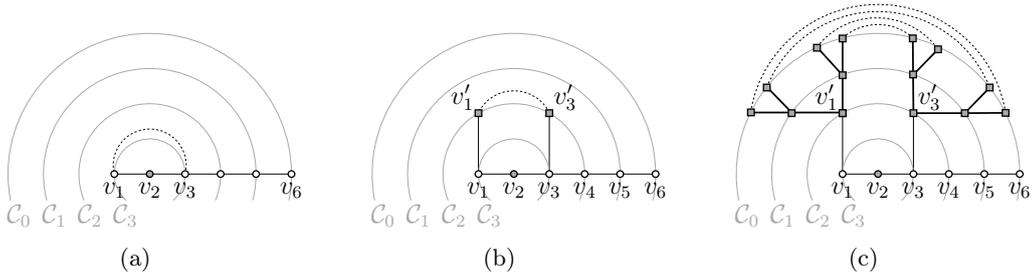

\centering
	\subcaptionbox{\centering \label{fig:path-1}}{
	\includegraphics[width=0.31\textwidth,page=7]{figs/representation}}
	\subcaptionbox{\centering \label{fig:path-2}}{
	\includegraphics[width=0.31\textwidth,page=8]{figs/representation}}
	\subcaptionbox{\centering \label{fig:path-3}}{
	\includegraphics[width=0.31\textwidth,page=9]{figs/representation}}
    \caption{%
    Illustration of the augmentation steps of our algorithm on a graph consisting of a path $\langle v_1,v_2,\ldots,v_6 \rangle$ and an edge $(v_1,v_3)$, assuming that the center of the ordered concentric representation is vertex $v_2$.}
	\label{fig:path}
\end{figure}

For the case of a general planar graph $G=(V,E)$ of maximum degree $\Delta$, the construction of the auxiliary graph $G_2$ may increase the time complexity dramatically. In particular, note that graph $G_2$ may have size exponential in $|V|$, when $G$ is sparse. To see this, consider a graph obtained from a path on $n$ vertices $\langle v_1,v_2,\ldots,v_n \rangle$ by adding an edge connecting $v_1$ to $v_3$. If $v_2$ is chosen as the center of the ordered concentric representation $R$, then edge $(v_1,v_3)$ becomes a level edge; see Figure~\ref{fig:path-1}. Since the endvertex $v_3$ of $(v_1,v_3)$ has degree greater than two, the edge $(v_1,v_3)$ will be subdivided twice. Let $v_1'$ and $v_3'$ be the subdivision vertices; refer to Figure~\ref{fig:path-2} for an illustration. Observe that the edge $(v_1',v_3')$ is a new level edge, whose endvertices are not in the outermost circle of the representation. Then each of the trees $T_{v_1'}$ and $T_{v_3'}$ replacing $(v_1',v_3')$ has $2^{n-3}-1$ vertices; see Figure~\ref{fig:path-3}. So, in order to keep the time complexity of our algorithm polynomial in the size of $G$, we avoid introducing trees $T_{u}$ and $T_{v}$ explicitly for each level edge $(u,v)$. In fact, the introduction of these trees was convenient for proving the correctness of our approach. But, as we will argue below, to determine the correct queue layout we only need to know the size of the set $M_{u,v}$. 

\begin{theorem}\label{thm:time}
Every planar graph $G=(V,E)$ of maximum degree $\Delta$ admits a queue layout with at most $32(2\Delta-1)^{6}-1$ queues, which can be computed in $\Oh(|V|^2\log{\Delta})$ time.
\end{theorem}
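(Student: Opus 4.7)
The correctness and the upper bound of $32(2\Delta-1)^{6}-1$ on the queue number follow immediately from Theorem~\ref{thm:main}; only the running time needs to be argued. The principal obstacle is that the \fullBM auxiliary graph $G_2$ constructed in Section~\ref{sec:general} can have $\Delta^{\Omega(|V|)}$ vertices, so it cannot be materialized. My plan is therefore to execute the algorithm of Sections~\ref{sec:subcase}--\ref{sec:general} on a symbolic representation of $G_2$ built on top of $G_1$, computing every value the queue layout depends on without ever enumerating the matching edges of $G_2$.

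The first stage is $\Oh(|V|)$-time preprocessing: augment $G$ so that every vertex has degree at least~$2$, compute an ordered concentric representation and its associated BFS-tree $T$ via Lemma~\ref{lem:pupyrev}, and build $G_1$ by performing the at most three subdivisions per edge described in Step~1 of Section~\ref{sec:general}. This gives an embedded graph $G_1$ of size $\Oh(|V|)$. For each level edge $(u,v)$ of $G_1$ at layer $\ell$, I keep only the single integer $|M_{u,v}|=(\Delta-1)^{\ell}$, which is all that $G_2$ contributes beyond $G_1$ for the construction.

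The key structural fact enabling the implicit simulation is that the matching $M$ of $G_2$ is drawn non-crossing below the line $L$, so the family of blocks $\{M_{u,v}\}$ is laminar; its containment tree is determined by the nesting structure of the level edges of $G_1$ in the planar embedding and can be extracted in $\Oh(|V|)$ time. Because the edges inside a single block are pairwise nested, their nesting-values are consecutive, and the minimum nesting-value in a block, which equals the matching-value of the two roots $u$ and $v$ of $T_u$ and $T_v$, can be computed for every block by a top-down traversal of the laminar tree via the recurrence $\mbox{nv}_{\min}(B)=\mbox{nv}_{\min}(B')+|B'|$ whenever $B'$ is the block directly containing $B$. A bottom-up traversal of $T$ in $G_1$ then propagates matching-values to all remaining vertices via $\mbox{mv}(v)=\min_{c}\mbox{mv}(c)$, where $c$ ranges over the children of $v$ in $T$. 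From these I compute layer-groups via Equation~\eqref{eq:layergroup}, sort the vertices within each layer by (layer-group, horizontal position along $L_\ell$) to form the linear order $\prec$, and assign every edge of $G_1$ to one of $2\Delta-2$ queues by the rule in the proof of Lemma~\ref{lem:fullBM}. A concluding invocation of the constructive proof of Lemma~\ref{lem:subdivision} with $k=3$ turns this queue layout of $G_1$ into one of $G$ with at most $32(2\Delta-1)^{6}-1$ queues.

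The decisive obstacle for the time bound is the bit-length of the integers manipulated: block sizes, nesting-values, matching-values, and layer-groups can all reach magnitude $\Delta^{\Theta(|V|)}$, so they require $\Theta(|V|\log\Delta)$ bits of storage and each addition or comparison costs $\Oh(|V|\log\Delta)$. The laminar DP, the tree DP over $T$, and the per-layer radix-sort on the $\Oh(|V|\log\Delta)$-bit sort keys together perform $\Oh(|V|)$ such arithmetic steps, so the overall running time is $\Oh(|V|^2\log\Delta)$. The final application of Lemma~\ref{lem:subdivision} runs in $\Oh(|V|\cdot\mathrm{poly}(\Delta))$ and is absorbed by this bound.
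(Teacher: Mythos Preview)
Your proposal is correct and follows essentially the same approach as the paper: avoid materializing $G_2$, attach to each level edge $(u,v)$ of $G_1$ the weight $|M_{u,v}|=(\Delta-1)^{\ell}$, compute (minimum) nesting-values by a top-down pass over the containment structure of the level edges, propagate matching-values up the BFS-tree, and charge the $\Oh(|V|^2\log\Delta)$ bound to $\Oh(|V|)$ big-integer operations of $\Oh(|V|\log\Delta)$ bits each. The only cosmetic difference is that the paper exposes the containment structure by rerouting every level edge so that an arc of it lies outside $\mathcal{C}_0$, whereas you phrase the same structure as the laminar family of the blocks $M_{u,v}$ and its containment tree; the resulting recurrence $\mbox{nv}_{\min}(B)=\mbox{nv}_{\min}(B')+|B'|$ coincides with the paper's rule ``nesting-value of $(u',v')$ plus $w(u',v')$''.
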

\begin{proof}
Let $G=(V,E)$ be a planar graph of maximum degree $\Delta$, which admits a queue layout with at most $32(2\Delta-1)^{6}-1$ queues by Theorem~\ref{thm:main}. 

To construct graph $G_1$ in Step~1 of our algorithm (refer to Section~\ref{step:g1}), we first need to compute the ordered concentric representation $R$ of $G$, which by Lemma~\ref{lem:pupyrev} can be done in linear time. Based on $R$, the computation of graph $G_1$ needs an additional $\Oh(|E|)$ time. 

In Step~2 of our algorithm (refer to Section~\ref{step:g2}), where we construct graph $G_2$, we need to avoid introducing trees $T_{u}$ and $T_{v}$ for each level edge $(u,v)$ of $T$. To achieve this, we first reroute each level edge in $R$, so that one part of it lies outside the outermost circle $\mathcal{C}_0$ of $R$. This can be done without introducing crossings in $\Oh(|E|)$ time in total, by a single traversal over all level edges in $R$ starting from those level edges of the outermost circle and moving inwards in the representation $R$. This edge rerouting guarantees that for any two level edges $(u,v)$ and $(u',v')$, all edges of $M_{u,v}$ nest all edges of $M_{u',v'}$ if and only if the part of $(u,v)$ that lies outside $\mathcal{C}_0$ of $R$ nests the corresponding part of $(u',v')$ in $R$. Therefore, instead of introducing two trees $T_{u}$ and $T_{v}$ for each level edge $(u,v)$ of $T$, we assign a \emph{weight} $w(u,v)$ to the edge $(u,v)$ equal to $|M_{u,v}|$ and compute the nesting-values of the edges of $M$ and the matching-values of the vertices of $G_2$ based on the weights of these edges as follows; e.g., the weight of the level edge $(v_1',v_3')$ of Figure~\ref{fig:path-2} is four. Consider an edge $(u,v)$ such that $(u,v)$ either belongs to $M$ or $(u,v)$ is a level edge of some layer $\ell>0$; note that in the former case we assume that $w(u,v)=1$. Observe that the order in which the endvertices of the edges of $M$ and the edge segments of the level edges lying outside $\mathcal{C}_0$  appear along $\mathcal{C}_0$ defines a linear order on their endvertices. Edge $(u,v)$ has nesting-value zero if there does not exist an edge $(u',v')$ that nests $(u,v)$ with respect to this order. Otherwise, let $(u',v')$ be the edge with maximum nesting-value that nests edge $(u,v)$. Then, edge $(u,v)$ has nesting-value equal to the nesting-value of $(u',v')$ plus $w(u',v')$. Note that once all nesting-values of edges are computed, the computation of the matching-values of vertices can be done as in the unweighted case. 
Up to this point, the time complexity of the algorithm is in $\Oh(|V|+|E|)=\Oh(|V|)$, assuming each operation can be performed in constant time. However, we operate with numbers that are not linear in $|V|$ (in particular, weights, nesting-values, and matching-values). On the other hand, these numbers do not exceed the maximum number of leaves of a $(\Delta-1)$-ary tree of height at most $|V|$, that is, $\Oh(\Delta^{|V|})$ or equivalently $\Oh(2^{|V|\log{\Delta}})$ . Also, we only need standard operations in our calculations (namely, additions and comparisons), hence we can perform each operation in $\Oh(|V|\log{\Delta})$ time, which increases the running time of the algorithm to $\Oh(|V|^2\log{\Delta})$.


Concerning Step 3 of our algorithm, we first note that we can construct a queue layout $\mathcal L_2$ of $G_2$ with $2\Delta-2$ queues in linear time by Lemma~\ref{lem:fullBMtime}. A queue layout $\mathcal L_1$ of $G_1$ with $2\Delta-2$ queues can be directly derived from $\mathcal L_2$ in linear time. The final queue layout $\mathcal L$ of $G$ is obtained by applying Lemma~\ref{lem:subdivision}. Although the time complexity of this lemma is not explicitly stated in~\cite[Lemma 27]{DW05}, one can easily observe that it is linear when the number of subdivision vertices is a constant (three in our case). In particular, the main operation of this lemma is a linear-time construction of a suitable track layout of a subgraph of $G$~\cite{DPW04}. This operation is repeated a number of times that is logarithmic in the number of subdivision vertices and hence constant in our case. This concludes the proof.
\end{proof}

\section{Open Problems}
\label{sec:conclusions}
As discussed in the introduction, our result prompted new developments on the long-standing conjecture by Heath, Leighton and Rosenberg. In particular, Dujmovi\'c et al.~\cite{DBLP:journals/corr/abs-1904-04791,focs} proved that the queue number of a planar graph is at most 49; clearly, improving this bound is a tempting question. However, as discussed by the authors in~\cite{DBLP:journals/corr/abs-1904-04791}, new ideas are probably required to obtain a significant improvement. Thus, we ask the following question, which may be easier to answer and of independent interest: Given a planar graph, is it possible to compute a linear order of its vertices and a partition of its edges into at most 3 stacks and fewer than 49 queues? These layouts are known as \emph{mixed layouts} and have been introduced in the paper by Heath, Leighton and Rosenberg, who conjectured that every planar graph admits a mixed layout with one stack and one queue. Pupyrev~\cite{Pup17} recently disproved this conjecture by demonstrating a planar graph for which one stack and one queue do not suffice, and conjectured that for bipartite planar graphs one stack and one queue are always sufficient.

\section*{Acknowledgements}
This work started at the Graphs and Network Visualisation workshop (GNV'18). We thank the organizers and the participants for fruitful discussions. We also thank David R.\ Wood for pointing out an issue in an earlier version of this paper and the anonymous referees of both the journal and the conference version of this paper for their valuable comments and suggestions.
\clearpage 
\bibliographystyle{abbrv}
\bibliography{paper}
\end{document}